\newcommand{\raisemath}[1]{\mathpalette{\raisem@th{#1}}}
\newcommand{\raisem@th}[3]{\raisebox{#1}{$#2#3$}}
\newcommand\restr[2]{{
  \left.\kern-\nulldelimiterspace 
  #1 
  \vphantom{\big|} 
  \right|_{#2} 
  }}
\def\adm_#1{ \operatorname{adm}_{#1} }
\renewcommand{\le}{\leqslant}
\renewcommand{\leq}{\leqslant}
\renewcommand{\ge}{\geqslant}
\renewcommand{\geq}{\geqslant}
\renewcommand{\epsilon}{\varepsilon}
\newcommand{\yaay}{\kern4pt \ding{51} \kern-8pt \ding{51}}%
\theoremstyle{plain}
\newtheorem{lemma}{Lemma}
\newtheorem{theorem}{Theorem}
\newtheorem{corollary}{Corollary}
\newtheorem*{claim}{Claim}
\newtheoremstyle{case}
  {\topsep}   
  {\topsep}   
  {}  
  {\parindent}       
  {\bfseries} 
  {\normalfont.}         
  {5pt plus 1pt minus 1pt} 
  {#1 #2: {\normalfont #3}}          
\theoremstyle{case}
\numberwithin{subcase}{case}
\theoremstyle{definition}
\newtheorem{definition}{Definition}
\newcommand*\varrule[1][0.4pt]{\leavevmode\leaders\hrule height#1\hfill\kern0pt}
\setlist[1]{labelindent=\parindent,leftmargin=*} 
\setlist{itemsep=0pt}
\newenvironment{tightcenter}
 {\parskip=0pt\par\nopagebreak\centering}
 {\par\noindent\ignorespacesafterend}
\newlength{\RoundedBoxWidth}
\newsavebox{\GrayRoundedBox}
\newenvironment{GrayBox}[1]%
   {\setlength{\RoundedBoxWidth}{\textwidth-4.5ex}
    \def\boxheading{#1}
    \begin{lrbox}{\GrayRoundedBox}
       \begin{minipage}{\RoundedBoxWidth}%
   }{%
       \end{minipage}
    \end{lrbox}%
    \begin{tightcenter}%
    \begin{tikzpicture}%
       \node(Text)[draw=black!20,fill=white,rounded corners,%
             inner sep=2ex,text width=\RoundedBoxWidth]%
             {\usebox{\GrayRoundedBox}};
        \coordinate(x) at (current bounding box.north west);
        \node [draw=white,rectangle,inner sep=3pt,anchor=north west,fill=white] 
        at ($(x)+(6pt,.75em)$) {\boxheading};
    \end{tikzpicture}
    \end{tightcenter}\vspace{0pt}%
    \ignorespacesafterend
}    
\newenvironment{problem}[2][]{\noindent\ignorespaces%
                                \FrameSep=6pt%
                                \parindent=0pt%
                \vspace*{-.5em}
                \ifthenelse{\isempty{#1}}{%
                  \begin{GrayBox}{\textsc{#2}}%
                }{%
                  \begin{GrayBox}{\textsc{#2} parametrised by~{#1}}%
                }
                \newcommand\Prob{Question:}%
                \newcommand\Input{Input:}%
                \begin{tabular*}{\textwidth}{@{\hspace{.1em}} >{\itshape} p{1.6cm} p{0.8\textwidth} @{}}%
            }{
                \end{tabular*}%
                \end{GrayBox}%
                \vspace*{-.5em}
                \ignorespacesafterend
            } 
\title{$k$-Distinct In- and Out-Branchings in Digraphs\footnote{A short version of this paper was published in the proceedings of ICALP 2017. Research of Gutin was partially supported by Royal Society Wolfson Research Merit Award.}}
\author[1]{Gregory Gutin}\author[2]{Felix Reidl}\author[1]{Magnus Wahlstr{\"o}m}
\affil[1]{Royal Holloway, University of London, UK}
\affil[2]{North Carolina State University, USA}
\begin{document}

\maketitle

\begin{abstract}
  An out-branching and an in-branching of a digraph $D$ are called
  $k$-distinct if each of them has $k$ arcs absent in the other. Bang-Jensen,
  Saurabh and Simonsen (2016)  proved that the problem of deciding whether a
  strongly connected digraph $D$ has $k$-distinct out-branching and in-
  branching is fixed-parameter tractable (FPT) when parameterized by $k$. They
  asked whether the problem remains FPT when extended to arbitrary digraphs.
  Bang-Jensen and Yeo (2008) asked whether the same problem is FPT when the
  out-branching and in-branching have the same root.

  By linking the two problems with the problem of whether a digraph has an
  out-branching with at least $k$ leaves (a leaf is a vertex of out-degree
  zero), we first solve the problem of Bang-Jensen and Yeo (2008). We then
  develop a new digraph decomposition called the rooted cut decomposition and
  using it we prove that the problem of Bang-Jensen et al. (2016) is FPT for
  all digraphs. We believe that the \emph{rooted cut decomposition} will be
  useful for obtaining other results on digraphs.
\end{abstract}

\section{Introduction}
While both undirected and directed graphs are important in many applications,
there are significantly more algorithmic and structural results for undirected
graphs than for directed ones. The main reason is likely to be the fact that
most problems on digraphs are harder than those on undirected graphs. The
situation has begun to change: recently there appeared a number of important
structural results on digraphs, see e.g.
\cite{fradkinJCTB103,kawarabayashiSTOC15,kimJCTB112}. However, the progress
was less pronounced with algorithmic results on digraphs, in particular, in
the area of parameterized algorithms.

In this paper, we introduce a new decomposition for digraphs and show its usefulness by
solving an open problem by Bang-Jensen, 
Saurabh and Simonsen~\cite{bangA76}. 
We believe that our decomposition will prove to be helpful for obtaining further
algorithmic and structural results on digraphs.

A digraph $T$ is an {\em out-tree} (an {\em in-tree}) if $T$ is an oriented
tree with just one vertex $s$ of in-degree zero (out-degree zero). The vertex
$s$ is the {\em root} of $T.$ A vertex $v$ of an out-tree (in-tree) is called a {\em
leaf} if it has out-degree (in-degree) zero. If an out-tree (in-tree) $T$ is a spanning
subgraph of a digraph $D,$ then $T$ is an {\em out-branching} (an {\em 
in-branching}) of $D$. It is well-known that a digraph $D$ contains an out-branching (in-branching)
if and only if $D$ has only one strongly connected component with no incoming
(no outgoing) arc \cite{bang2002}. 

A well-known result in digraph algorithms, due to Edmonds, states that given a
digraph $D$ and a positive integer $\ell$, we can decide whether $D$ has
$\ell$ arc-disjoint out-branchings in polynomial time~\cite{edmonds1973}. The
same result holds for $\ell$ arc-disjoint in-branchings. Inspired by this
fact, it is natural to ask for a ``mixture" of out- and in-branchings: given a
digraph $D$ and a pair $u,v$ of (not necessarily distinct) vertices, decide
whether $D$ has an arc-disjoint out-branching $T^+_u$ rooted at $u$ and 
in-branching $T^-_v$ rooted at $v$. We will call this problem {\sc Arc-Disjoint Branchings}.

Thomassen proved (see  \cite{bangJCT51}) that the problem is NP-complete and
remains NP-complete even if we add the condition that $u=v$. The same result still
holds for digraphs in which the out-degree and in-degree of every vertex
equals two \cite{bangDAM161}. The problem is polynomial-time solvable for
tournaments \cite{bangJCT51} and for acyclic digraphs
\cite{bangJGT42,bercziIPL109}. The single-root special case (i.e., when $u=v$)
of the problem is polynomial time solvable for quasi-transitive
digraphs\footnote{A digraph $D=(V,A)$ is quasi-transitive if for every
$xy,yz\in A$ there is at least one arc between $x$ and $z$, i.e. either $xz\in
A$ or $zx\in A$ or both.} \cite{bangJGT20} and for locally semicomplete
digraphs\footnote{A digraph $D=(V,A)$ is locally semicomplete if for every
$xy,xz\in A$ there is at least one arc between $y$ and $z$ and for every
$yx,zx\in A$ there is at least one arc between $y$ and $z$. Tournaments and
directed cycles are locally semicomplete digraphs.} \cite{bangJGT77}.

An out-branching $T^+$ and an in-branching $T^-$ are called {\em $k$-distinct} if $|A(T^+) \setminus A(T^-) | \geq k$.
Bang-Jensen, Saurabh and Simonsen \cite{bangA76} considered the following parameterization of {\sc Arc-Disjoint Branchings}.

\begin{problem}[$k$]{$k$-Distinct Branchings}
  \Input & A digraph $D$, an integer~$k$. \\
  \Prob & Are there $k$-distinct out-branching $T^+$ and in-branching $T^-$?
\end{problem}

\noindent They proved that {\sc $k$-Distinct Branchings} is fixed-parameter
tractable (FPT)\footnote{Fixed-parameter tractability of {\sc $k$-Distinct
Branchings} means that the problem can be solved by an algorithm of runtime
$O^*(f(k))$, where $O^*$ omits not only constant factors, but also polynomial
ones, and $f$ is an arbitrary computable function. The books
\cite{cygan2015,downey2013} are excellent recent introductions to
parameterized algorithms and complexity.} when $D$ is strongly connected and
conjectured that the same holds when $D$ is an arbitrary digraph. Earlier,
Bang-Jensen and Yeo \cite{bangDAM156} considered the version of {\sc
$k$-Distinct Branchings} where $T^+$ and $T^-$ must have the same
root
and asked whether this version of {\sc $k$-Distinct Branchings}, which we call
{\sc Single-Root $k$-Distinct Branchings}, is FPT.

The first key idea of this paper is to relate  {\sc $k$-Distinct Branchings}
to the problem of deciding whether a digraph has an out-branching with at
least $k$ leaves via a simple lemma (see Lemma~\ref{lemma:leaves-branch}). The
lemma and the following two results on out-branchings with at least $k$
leaves allow us to solve the problem of Bang-Jensen and Yeo \cite{bangDAM156}
and to provide a shorter proof for the above-mentioned result of Bang-Jensen,
Saurabh and Simonsen \cite{bangA76} (see Theorem~\ref{thm:strong}).

\begin{theorem}[\cite{alonSIAMJDM23}]\label{thm:alon}
  Let $D$ be a strongly connected digraph. If $D$ has no out-branching with at least $k$ leaves, then the (undirected) pathwidth of $D$ is bounded by $O(k\log k)$.
\end{theorem}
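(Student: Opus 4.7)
The plan is to prove the contrapositive: given a strongly connected digraph $D$ in which every out-branching has fewer than $k$ leaves, construct a path decomposition of the underlying undirected graph of $D$ of width $O(k \log k)$.

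First, fix an out-branching $T$ of $D$ maximising the number of leaves (which exists by strong connectivity). By hypothesis, $T$ has $\ell < k$ leaves, hence at most $\ell - 1 < k$ branching vertices (those of out-degree $\geq 2$ in $T$). Let $B \subseteq V(D)$ collect the leaves and branching vertices, so $|B| < 2k$. All remaining vertices have out-degree exactly $1$ in $T$, so $T - B$ breaks into at most $2k-1$ maximal directed \emph{spine paths}.

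Next, contract each spine path to a single edge to obtain an auxiliary tree $T^\star$ on $O(k)$ nodes. Since every $n$-node tree has pathwidth $O(\log n)$ via a balanced nested-separator construction, there is a linear ordering $\sigma$ of $V(D)$ respecting each spine path such that, at any cut of $\sigma$, only $O(\log k)$ spine paths are ``partially processed.'' Sweep $\sigma$ and, at each cut, assemble a bag containing: (i) all of $B$, contributing $O(k)$; (ii) the current ``cursor'' on each of the $O(\log k)$ active spine paths, contributing $O(\log k)$; (iii) endpoints of non-tree arcs of $D$ whose other endpoint lies on the opposite side of the cut.

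The technical heart is bounding contribution (iii) by $O(k \log k)$. The idea is to exploit the leaf-maximality of $T$: if too many non-tree arcs crossed a given cut, one could perform a \emph{swap} --- replacing a tree arc entering some vertex $v$ by a suitable non-tree arc entering $v$ --- turning a path vertex into a leaf and thereby increasing the leaf count of $T$, contradicting maximality. A charging argument should tie each crossing non-tree arc either to a distinct vertex of $B$ or to one of the $O(\log k)$ levels of the nested-separator decomposition of $T^\star$, giving the claimed bound. Summing, the total bag width is $O(k) + O(\log k) + O(k \log k) = O(k \log k)$.

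The main obstacle is the last step: turning the informal principle ``many crossings imply a leaf-increasing swap'' into a rigorous charging scheme, handling forward, backward and cross non-tree arcs uniformly across the $\log k$ nesting layers of the decomposition of $T^\star$, and correctly accounting for arcs whose heads lie inside $B$ (where a swap might destroy a leaf as well as create one, so the net gain must be argued carefully).
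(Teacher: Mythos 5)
The paper does not contain a proof of this statement: Theorem~\ref{thm:alon} is imported as a black box from Alon, Fomin, Gutin, Krivelevich and Saurabh~\cite{alonSIAMJDM23}, so there is no in-paper argument to compare against and I can only assess your sketch on its own terms. The scaffolding is sound --- with fewer than $k$ leaves there are indeed $O(k)$ branching vertices, $O(k)$ maximal spine paths, and a balanced nested-separator layout of the contracted tree gives $O(\log k)$ partially-swept spines at any cut --- and the target $O(k)+O(\log k)+O(k\log k)$ has the right shape. But the step you flag as ``the main obstacle'' is in fact the entire content of the theorem, and the exchange principle you lean on does not carry it. Replacing the tree arc $v_{i-1}v_i$ on a spine by a non-tree arc $uv_i$ gains a leaf only when $u$ is not already a leaf of $T$ (otherwise the gain is cancelled), and the exchange is not even legal when $u$ is a descendant of $v_i$ in $T$, since it creates a cycle. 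Back arcs from descendants are precisely what drives up the treewidth in the example the paper gives right after Theorem~\ref{thm:strong}, which has a unique one-leaf out-branching yet unbounded treewidth; and your argument nowhere uses strong connectivity beyond the bare existence of an out-branching, even though the statement is false without it, so strong connectivity must enter the charging itself.

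What is actually needed is the full strength of the hypothesis that \emph{no} out-tree of $D$, rooted anywhere, has $k$ leaves (equivalent, via Lemma~\ref{lem:treeext} and strong connectivity, to the out-branching statement), not merely leaf-maximality of one chosen $T$. For instance, if a single cut at $v_j$ on a spine $v_1\dots v_m$ is crossed by back arcs $v_{j'}v_i$ with $i\le j<j'$ landing on $k$ distinct heads $v_i$, one can assemble an out-tree rooted at $v_j$ (the forward path $v_j\dots v_m$ plus those back arcs) with roughly $k$ leaves, contradicting the hypothesis --- that is the kind of rerooting argument the swap idea misses. A complete charging scheme must also account for vertices on spines that have already been fully swept past the current cut but still have arcs into later-processed spines or into $B$; these stay in the bag but are not covered by your ``$O(\log k)$ active cursors'' budget. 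As it stands the proposal is an honest partial plan, but the quantitative core --- that the crossing-arc contribution is $O(k\log k)$ --- is unproven, and the self-identified obstacle is a genuine gap rather than a detail.
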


\begin{theorem}[\cite{daligaultJCSS76,kneisA61}] \label{thm:kleaf}
  We can decide whether a digraph $D$ has an out-branching with at least $k$ leaves in time\footnote{The 
  algorithm of \cite{kneisA61} runs in time $O^*(4^k)$ and its modification in \cite{daligaultJCSS76} in time $O^*(3.72^k)$.} 
   $O^*(3.72^k)$.
\end{theorem}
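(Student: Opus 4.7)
The plan is to design a bounded search tree algorithm that grows a partial out-tree $T$ of $D$ rooted at some chosen vertex $r$, branches whenever the continuation is ambiguous, and tracks a measure that decreases with every branch. Concretely, I would parameterise by $\mu = k - \ell(T) - p(T)$, where $\ell(T)$ is the number of vertices of $T$ already frozen as leaves and $p(T)$ counts inactive sinks forced by reduction rules, aborting the branch successfully once $\mu \le 0$. Because every out-branching of $D$ is the union of an out-branching in each source strongly connected component merged with spanning out-trees of the other components, I can reduce to the case where $D$ has a unique source component and pick $r$ inside it.

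Before any branching, I would apply reduction rules that commit forced moves: if an active vertex of $T$ has a unique outgoing arc to $V(D) \setminus V(T)$, take it; if it has no such arc, freeze it as a leaf and increase $\ell(T)$; delete arcs inside $T$ that would violate the tree property; and, crucially, if an unused vertex is dominated (all its in-neighbours lie in $T$ and none of its out-neighbours can ever become a leaf after taking it), include it as a forced internal vertex. These rules guarantee that when branching is invoked, every active vertex has at least two useful out-neighbours outside $T$.

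I would then branch on an active vertex $v$ of $T$, considering two principal options: (a) declare $v$ a leaf, which shrinks $\mu$ by one; or (b) extend $v$ by choosing one of its out-neighbours $w \notin V(T)$ as a child, adding that arc to $T$. Case (b) is subdivided by the choice of $w$, producing a branch of degree $|N^+(v) \setminus V(T)|$. The naive recurrence $T(\mu) \le T(\mu-1) + d \cdot T(\mu)$ is useless, so the key is to pair option (b) with an immediate forced expansion or freezing at $w$ through the reduction rules: after extending into $w$, either $w$ becomes a new leaf (dropping $\mu$ by one) or the reductions apply at $w$ and trigger additional progress within a bounded number of rounds.

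The main obstacle is squeezing the branching vector down to yield $O^*(3.72^k)$ instead of the naive $O^*(4^k)$ from \cite{kneisA61}. To achieve this I would adopt the refinement of \cite{daligaultJCSS76}: use an amortised analysis where each extension branch is credited with the guaranteed future savings produced by the reduction rules at the newly added vertex and its neighbourhood, so that, in every worst case, two consecutive extensions together free at least one additional unit of $\mu$. The delicate work is enumerating the local configurations of $v$, its out-neighbourhood, and any second-level vertices, and verifying that each either reduces immediately, admits a good branching vector such as $(1,3)$ or $(2,2)$, or chains with a follow-up branch whose combined vector solves to at most $3.72$. Standard analysis of the resulting recurrence then completes the proof.
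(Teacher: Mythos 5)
The paper states this as a cited result of \cite{kneisA61} (giving $O^*(4^k)$) and \cite{daligaultJCSS76} (refining it to $O^*(3.72^k)$); it provides no proof, so there is no internal argument to compare yours against. Your sketch is a blind reconstruction of the bounded-search-tree scheme behind those algorithms, and the ingredients you list---growing a partial rooted out-tree, freezing leaves, applying forced-move reductions, and restricting the root to the unique source strong component---are indeed all part of the cited approach.

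However, your central branching rule does not match the cited algorithms, and this matters. In \cite{kneisA61} the branch at an active tree vertex $v$ is \emph{binary}: either $v$ is finalised as a leaf, or $v$ is declared internal and \emph{all} of its out-neighbours outside the current tree are added as children simultaneously. This ``take everything'' step, rather than a choice among single children $w$, is what yields a clean branching vector on the right measure and hence the $O^*(4^k)$ bound. A $(d+1)$-ary branch over the choice of a single child $w$, as you propose, produces exactly the useless recurrence you wrote down, and the amortisation you gesture at is too imprecise to recover from it; it is in any case not what the cited papers do. The improvement to $3.72^k$ in \cite{daligaultJCSS76} is a fine-grained case analysis of local degree configurations that yields better branching vectors than $(1,1)$ in several situations, and that case analysis is the substance of their result---exactly the part you defer with ``standard analysis of the resulting recurrence then completes the proof.'' So your sketch names the right components, but the branching step is incorrect as stated and the actual technical content needed for the $3.72^k$ bound is absent.
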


\noindent
The general case of {\sc $k$-Distinct Branchings} seems to be much more
complicated. We first introduce a version of {\sc $k$-Distinct Branchings}
called {\sc $k$-Rooted Distinct Branchings}, where the roots $s$ and $t$ of
$T^+$ and $T^-$ are fixed, and add arc $ts$ to $D$ (provided the arc is not in
$D$) to make $D$ strongly connected. This introduces a complication: we may
end up in a situation where $D$ has an out-branching with many leaves, and
thereby potentially unbounded pathwidth, but the root of the out-branching is
not $s$. To deal with this situation, our goal will be to
\emph{reconfigure} the out-branching into an out-branching rooted at $s$. In
order to reason about this process, we develop a new digraph decomposition we
call the \emph{rooted cut decomposition}. The cut decomposition of a digraph
$D$ rooted at a given vertex $r$ consists of a tree $\hat T$ rooted at $r$
whose nodes are some vertices of $D$ and subsets of vertices of $D$ called
{\em diblocks} associated with the nodes of $\hat T$.

Our strategy is now as follows. If $\hat T$ is
\emph{shallow} (i.e., it has bounded height), then any out-branching
with sufficiently many leaves can be turned into an out-branching rooted at $s$
without losing too many of the leaves. On the other hand, if $\hat T$ contains a path from the root of $\hat T$ with sufficiently many non-degenerate
diblocks (diblocks with at least three vertices), then we are able to show
immediately that the instance is positive.
The remaining and most difficult issue is to deal with digraphs with
decomposition trees that contain long paths of diblocks with only two
vertices, called \emph{degenerate} diblocks. In this case, we employ 
two reduction rules which lead to decomposition trees of bounded height.

The paper is organized as follows. In the next section, we provide some
terminology and notation on digraphs used in this paper. In
Section~\ref{sec:scd}, we prove Theorem~\ref{thm:strong}.
Section~\ref{sec:any} is devoted to proving that {\sc Rooted $k$-Distinct
Branchings}  is FPT for all digraphs using cut decomposition and
Theorems~\ref{thm:alon} and~\ref{thm:kleaf}. We conclude the paper in
Section~\ref{sec:conclusion}, where some open parameterized problems on
digraphs are mentioned. 

\section{Terminology and Notation}\label{sec:term}

Let us recall some basic terminology of
digraph theory, see \cite{bang2002}. A digraph $D$ is {\em strongly connected}
({\em connected}) if there is a directed (oriented) path from $x$ to $y$ for
every ordered pair $x,y$ of vertices of $D$. Equivalently, $D$ is connected
if the underlying graph of $D$ is connected. A vertex $v$ is a {\em source}
({\em sink}) if its in-degree (out-degree) is equal to zero. It is well-known
that every acyclic digraph has a source and a sink \cite{bang2002}.

In this paper, we exclusively work with digraphs, therefore we assume all our
graphs, paths, and trees to be directed unless otherwise noted.  For a path~$P = x_1x_2 \ldots x_k$ of length~$k-1$
we will employ the following notation for subpaths of~$P$:
$P[x_i,x_j] := x_i\ldots x_j$ for~$1 \leq i \leq j \leq k$ is
the \emph{infix} of~$P$ from~$x_i$ to~$x_j$. For paths~$P_1 := x_1\ldots x_kv$
and~$P_2 := vy_1\ldots y_\ell$ we denote by~$P_1P_2 := x_1\ldots x_k v y_1\ldots y_\ell$
their \emph{concatenation}. For rooted
trees~$T$ and some vertex~$x \in T$, $T_x$ stands for the
subtree of~$T$ rooted at~$x$ (see Figure~\ref{fig:fins}). 

\begin{figure}
  \centering\includegraphics[scale=.7]{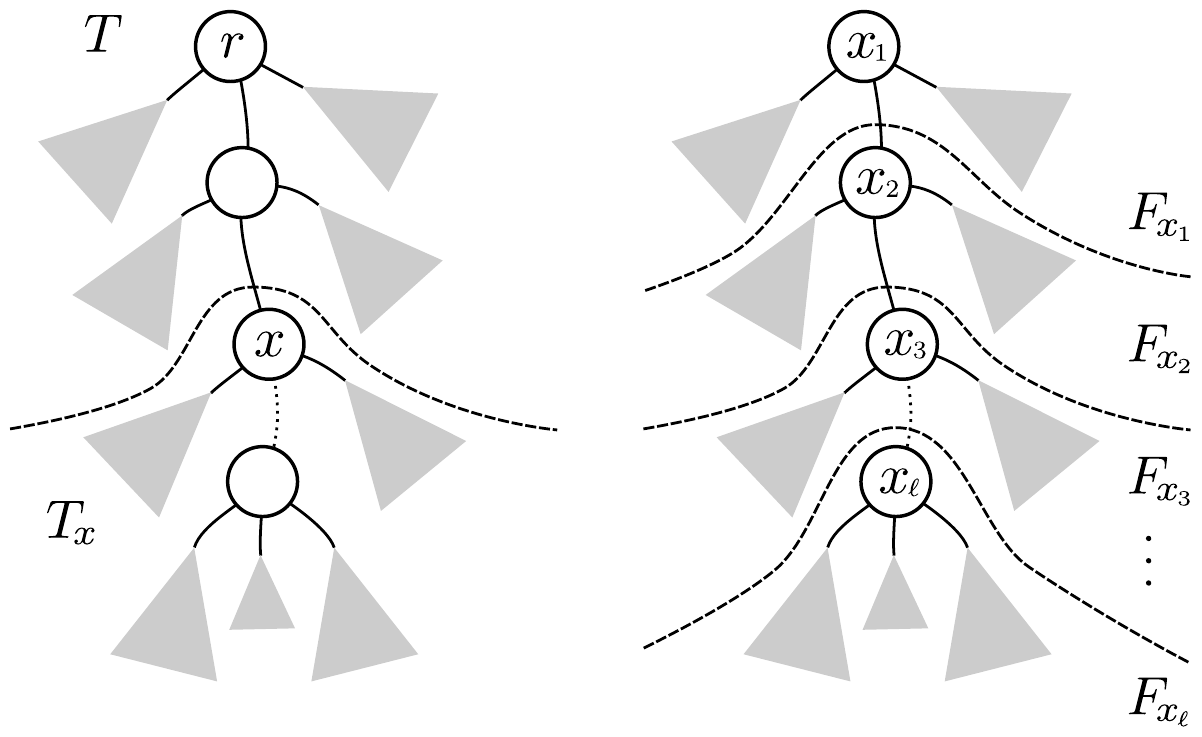}
  \caption{\label{fig:fins}%
    Subtree notation~$T_x$ for~$x \in T$ (left) and the fins~$F_{x_1}, \ldots, F_{x_\ell}$
    for a path~$x_1\ldots x_\ell$ in~$T$ (right).
  }\vspace*{-.5em}
\end{figure}

We will frequently partition the nodes of a tree around a path in the following
sense ({\em cf.} Figure~\ref{fig:fins}):
  Let~$T$ be a tree rooted at~$r$ and~$P = x_1\ldots x_\ell$ a path from~$r =
  x_1$ to some node~$x_\ell \in T$. The \emph{fins} of~$P$ are the
  sets~$\{F_{x_i}\}_{x_i \in P}$ defined as
$
      F_{x_i} := V(T_{x_i}) \setminus V(T_{x_{i+1}})~\text{for~$i < \ell$ and } 
      F_{x_\ell} := V(T_{x_\ell}).
$

\begin{definition}[Bi-reachable Vertex]  A vertex $v$ of a digraph $D$ is \emph{bi-reachable} from a vertex~$r$
if there exist two internally vertex-disjoint paths from~$r$ to~$v$.
\end{definition}

\noindent
Given a digraph~$D$ and a vertex~$r$, we can compute the set of 
vertices that are bi-reachable from~$r$ in polynomial time using 
network flows.

\section{Strongly Connected Digraphs}\label{sec:scd}

Let us prove a simple fact on a link between out/in-branchings with many
leaves and {\sc $k$-Distinct Branchings}, which together with a structural
result of Alon {\em et al.} \cite{alonSIAMJDM23} and an algorithmic result for the {\sc Maximum 
Leaf Out-branching} problem \cite{daligaultJCSS76,kneisA61} gives a short proof that
both versions of {\sc $k$-Distinct Branchings} are FPT for strongly connected digraphs.

\begin{lemma}\label{lemma:leaves-branch}
Let $D$ be a digraph containing an out-branching and an in-branching.  
  If $D$ contains an out-branching (in-branching)~$T$ with at least~$k+1$ leaves, then
  every in-branching (out-branching)~$T'$ of~$D$ is $k$-distinct from~$T$.
\end{lemma}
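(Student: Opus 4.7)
The plan is to establish both directions by a direct counting argument on the arcs of $T$, bounding how many of them can survive in $T'$. I will write it out for the case where $T$ is an out-branching with at least $k+1$ leaves; the in-branching case is obtained by reversing all arcs.

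Write $n = |V(D)|$, let $\ell \geq k+1$ be the number of leaves of $T$, and let $t$ be the root of the in-branching $T'$. The first observation I would use is that every arc of $T$ has its tail at a non-leaf of $T$ and, conversely, every non-leaf of $T$ is the tail of at least one arc of $T$; hence the set of tails of $A(T)$ has exactly $n - \ell$ elements. The second observation is that $T'$ being an in-branching means every vertex except $t$ has out-degree exactly one in $T'$, which in particular implies that for each vertex $u$ at most one arc with tail $u$ can lie in $A(T')$.

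Combining these, $|A(T) \cap A(T')|$ is at most the number of distinct tails appearing among arcs of $T$, i.e.\ at most $n - \ell$. Since $|A(T)| = n-1$, this yields
\[
  |A(T) \setminus A(T')| \;\geq\; (n-1) - (n-\ell) \;=\; \ell - 1 \;\geq\; k,
\]
which is precisely what $k$-distinctness requires. The dual statement (with $T$ an in-branching with at least $k+1$ leaves and $T'$ an arbitrary out-branching) follows by the exact same count, after replacing \emph{tail} with \emph{head} and the out-degree-one property of an in-branching with the in-degree-one property of an out-branching.

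There is essentially no obstacle here: once one isolates the right quantity to count (the tails of $T$ in the out-branching case, and the heads of $T$ in the in-branching case), the proof reduces to a one-line inequality. The only minor subtlety is whether the root of $T'$ happens to be a non-leaf of $T$, in which case one actually saves one more arc in the count, but the weaker bound $\ell - 1$ is already enough and holds in all cases.
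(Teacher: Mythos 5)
Your proof is correct and rests on the same key observation as the paper's: leaves of $T$ have no outgoing arcs in $T$, while every non-root vertex of $T'$ has exactly one outgoing arc in $T'$. The paper exhibits those $T'$-outgoing arcs of the leaves directly as at least $k$ elements of $A(T')\setminus A(T)$, whereas you bound $|A(T)\cap A(T')|$ by the $n-\ell$ non-leaves of $T$ and subtract---two phrasings of the same counting argument.
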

\begin{proof}
We will consider only the case when $T$ is an out-branching since the other
case can be treated similarly.  Let $T'$ be an in-branching of $D$ and let
$L$ be the set of all leaves of $T$ apart from the one which is the root of
$T'$. Observe that all vertices of $L$ have outgoing arcs in $T'$ and since
in $T$ the incoming arcs of $L$ are the only arcs incident to $L$ in $T$,
the sets of the outgoing arcs in $T'$ and incoming arcs in $T$ do not
intersect.
\end{proof}

\noindent
From the next section, the following problem will be of our main interest.
The problem \textsc{$k$-Distinct Branchings} in which $T^+$ and $T^-$ must be rooted at given vertices $s$ and $t$, respectively, will be called
the \textsc{Rooted $k$-Distinct Branchings} problem. 
We will use the following standard dynamic programming result (see, e.g., \cite{bangA76}).

\begin{lemma}\label{lem:dp}
  Let $H$ be a digraph of (undirected) treewidth  $\tau$. Then {\sc $k$-Distinct
  Branchings}, {\sc Single-Root $k$-Distinct Branchings} as well as \textsc{Rooted $k$-Distinct Branchings} 
  on $H$ can be solved
  in time $O^*(2^{O(\tau\log \tau)})$.
\end{lemma}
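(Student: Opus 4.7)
The plan is to invoke the standard recipe for doing dynamic programming over a nice tree decomposition of the underlying undirected graph of $H$. First I would compute a nice tree decomposition of width $\tau' = O(\tau)$ in single-exponential time using, e.g., the algorithm of Bodlaender et~al. Each arc $uv$ of $H$ can then be assigned uniquely to some bag containing both $u$ and $v$, and processed when that bag is introduced.

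The state at a node $t$ with bag $X_t$ records a ``partial solution'' on the subgraph induced so far. Concretely, I would store: (i) a partition $\pi^+$ of $X_t$ corresponding to the weakly connected components of the partial out-branching $T^+$ built so far, together with, for each vertex $v \in X_t$, a bit indicating whether $v$ has already received its unique incoming arc in $T^+$, and (when a root has been introduced and not yet forgotten) a marker for the block of $\pi^+$ containing the root $s$; (ii) an analogous partition $\pi^-$, out-degree bits, and root marker for the in-branching $T^-$; and (iii) an integer $j \in \{0, 1, \ldots, k\}$ tracking $|A(T^+) \setminus A(T^-)|$ capped at $k$. The number of partitions of $X_t$ is bounded by the Bell number $B(\tau'+1) = 2^{O(\tau \log \tau)}$, and the degree bits contribute a factor $2^{O(\tau)}$, so the total number of states per bag is $2^{O(\tau \log \tau)} \cdot (k+1)$.

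Transitions at introduce-vertex, introduce-arc, forget, and join nodes are standard: at introduce-arc we either include the arc in $T^+$, in $T^-$, in both, or in neither (four options), verifying that including it in $T^+$ does not create a cycle (the endpoints must lie in distinct blocks of $\pi^+$) nor a second incoming arc at its head, and analogously for $T^-$; we then merge blocks and update $j$ if the arc lies in $T^+ \setminus T^-$. At a forget node, a vertex $v$ is discarded only if it has exactly one incoming arc in $T^+$ (or zero, if it is marked as the root) and exactly one outgoing arc in $T^-$ (or zero for the in-branching root), and its block is not allowed to be later merged with another unless it contains the appropriate root marker. Join nodes combine two partitions into their join in the partition lattice, detecting cycles and summing the $j$-counters while respecting degree constraints. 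Acceptance at the root of the decomposition requires $j \geq k$, a single block in each of $\pi^+, \pi^-$, and correctly placed root markers. For {\sc Single-Root $k$-Distinct Branchings} we additionally require that the $T^+$- and $T^-$-root markers coincide; for {\sc Rooted $k$-Distinct Branchings} they are pinned to the prescribed $s$ and $t$.

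The only mildly delicate point is tracking both partitions and both degree-vectors simultaneously while correctly accounting for the symmetric difference counter $j$; but since each arc is processed once and contributes in a local, easily-checkable way, correctness follows by a routine induction on the decomposition. The total running time is the number of states per bag, $2^{O(\tau \log \tau)}$, times the number of bags (polynomial) and the cost of each transition (polynomial), giving the claimed $O^*(2^{O(\tau \log \tau)})$ bound.
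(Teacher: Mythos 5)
Your proof is correct and is the standard partition-tracking dynamic programming over a nice tree decomposition that the paper itself invokes as a known result (citing Bang-Jensen, Saurabh and Simonsen) rather than proving. The state (Bell-number-many partitions for each of the two partial branchings, in-/out-degree bits, root markers, and a counter $j$ capped at $k$), together with the usual cycle- and degree-consistency checks at introduce-arc, forget, and join nodes, is exactly what is expected, and the bound $B(\tau'+1)=2^{O(\tau\log\tau)}$ on partitions of a bag yields the claimed $O^*(2^{O(\tau\log\tau)})$ running time.
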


\noindent Note that if a digraph $D$ is a positive instance of {\sc Single-Root $k$-Distinct Branchings} 
then $D$ must be strongly connected as an out-branching and an in-branching rooted 
at the same vertex form a strongly connected subgraph of $D$. Thus, the following theorem, in particular, solves
the problem of Bang-Jensen and Yeo mentioned above.

\begin{theorem}\label{thm:strong}
  Both {\sc $k$-Distinct Branchings} and {\sc Single-Root $k$-Distinct Branchings}
  on strongly connected digraphs can be solved in time $O^*(2^{O(k\log^2 k)})$.
\end{theorem}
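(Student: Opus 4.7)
The plan is to combine Lemma~\ref{lemma:leaves-branch} with the structural result of Theorem~\ref{thm:alon} and the algorithmic result of Theorem~\ref{thm:kleaf} in a win-win fashion driven by the number of leaves of a maximum-leaf out-branching of $D$.

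First I would invoke the algorithm of Theorem~\ref{thm:kleaf} on $D$ (and on the reverse of $D$, to cover in-branchings symmetrically) to decide, in time $O^*(3.72^{k+1}) = O^*(2^{O(k)})$, whether $D$ admits an out-branching with at least $k+1$ leaves. Suppose such an out-branching $T^+$ exists, rooted at some vertex~$r$. Since $D$ is strongly connected, it also admits an in-branching $T^-$ rooted at any prescribed vertex, in particular at~$r$ (which, for the single-root variant, deals with the root constraint trivially). By Lemma~\ref{lemma:leaves-branch}, $T^+$ and $T^-$ are $k$-distinct, so the instance is a YES instance of both \textsc{$k$-Distinct Branchings} and \textsc{Single-Root $k$-Distinct Branchings}. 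The symmetric case in which a many-leaf in-branching exists is treated analogously.

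Otherwise, $D$ admits neither an out-branching nor an in-branching with at least $k+1$ leaves. Applying Theorem~\ref{thm:alon} to the (strongly connected) digraph $D$ then gives that the pathwidth, and hence the treewidth, of the underlying undirected graph of $D$ is $\tau = O(k \log k)$. I would then invoke Lemma~\ref{lem:dp} to solve either problem directly on $D$ in time
\[
  O^*\!\bigl(2^{O(\tau \log \tau)}\bigr) \;=\; O^*\!\bigl(2^{O(k \log k \cdot \log(k \log k))}\bigr) \;=\; O^*\!\bigl(2^{O(k \log^2 k)}\bigr).
\]
Since the first branch costs $O^*(2^{O(k)})$ and the second $O^*(2^{O(k \log^2 k)})$, the overall running time meets the claimed bound.

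The main conceptual ingredient is the leaves-versus-pathwidth dichotomy; once this is in place, everything else is essentially a matter of bookkeeping. The only subtlety I foresee is that the YES branch requires an explicit witness rather than a mere decision, but this is standard: the algorithms cited in Theorem~\ref{thm:kleaf} can be made constructive, and in a strongly connected digraph an in-branching rooted at any chosen vertex is computable in polynomial time (e.g.\ via Edmonds' theorem). Thus the proof reduces to a clean application of the three cited results.
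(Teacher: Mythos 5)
Your proof is correct and takes essentially the same route as the paper's: run the $k$-leaf out-branching algorithm, apply Lemma~\ref{lemma:leaves-branch} in the YES branch (using strong connectivity to place the in-branching at any desired root), and otherwise fall back to the pathwidth bound of Theorem~\ref{thm:alon} plus the treewidth dynamic program of Lemma~\ref{lem:dp}. The only difference is that you additionally test the reverse digraph for a many-leaf in-branching; this is harmless but unnecessary, since the absence of an out-branching with $k+1$ leaves alone already triggers Theorem~\ref{thm:alon} and the small-pathwidth branch.
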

\begin{proof}
  The proof is essentially the same for both problems and we will give it for
  {\sc Single-Root $k$-Distinct Branchings}. Let $D$ be an input strongly
  connected digraph. By Theorem~\ref{thm:kleaf} using an $O^*(3.72^k)$-time
  algorithm we can find an out-branching $T^+$ with at least $k+1$ leaves, or
  decide that $D$ has no such out-branching. If $T^+$ is found, the instance of
  {\sc Single-Root $k$-Distinct Branchings} is positive by 
  Lemma~\ref{lemma:leaves-branch} as any in-branching $T^-$ of $D$ is $k$-distinct from $T^+$.
  In particular, we may assume that $T^-$ has the same root as $T^+$ (a strongly
  connected digraph has an in-branching rooted at any vertex).
  Now suppose that $T^+$ does not exist. Then, by  Theorem~\ref{thm:alon} the
  (undirected) pathwidth of~$D$ is bounded by~$O(k \log k)$. Thus, by Lemma
  \ref{lem:dp} the instance can be solved in time $O^*(2^{O(k\log^2 k)})$.
\end{proof}

\noindent
The following example demonstrates that Theorem \ref{thm:alon} does not hold
for arbitrary digraphs and thus the proof of Theorem \ref{thm:strong} cannot
be extended to the general case. Let $D$ be a digraph with vertex set
$\{v_0,v_1,\dots ,v_{n+1}\}$ and arc set $\{v_0v_1, v_1v_2, \ldots,
v_nv_{n+1}\}\cup \{v_iv_j:\ 1\le j<i\le n\}.$ Observe that $D$ is of unbounded
(undirected) treewidth, but has unique in- and out-branchings (which are
identical). The same statement holds if we add an arc $v_{n+1}v_0$ (to make
the graph strongly connected) but insist that the out-branching is rooted in
$v_0$ and the in-branching in $v_{n+1}$.

\section{The $k$-Distinct Branchings Problem}\label{sec:any}

\noindent
In this section, we fix a digraph~$D$ with terminals $s,t$ and simply talk
about \emph{rooted out-branchings (in-branchings)} whose root we implicitly
assume to be $s$ ($t$). Similarly, unless otherwise noted, a \emph{rooted 
out-tree (in-tree)} is understood to be rooted at~$s$ ($t$).

Clearly, to show that both versions of \textsc{$k$-Distinct Branchings} are FPT it is sufficient to prove the following:

\begin{theorem}\label{thm:main}
 \textsc{Rooted $k$-Distinct Branchings} is FPT for arbitrary digraphs.
\end{theorem}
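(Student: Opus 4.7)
The plan is to follow the roadmap sketched after Theorem~\ref{thm:kleaf}, reducing wherever possible to the strongly connected case treated in Theorem~\ref{thm:strong}. First, since a rooted out-branching at $s$ uses no arc entering $s$ and a rooted in-branching at $t$ uses no arc leaving $t$, we may add the arc $ts$ (if absent) without changing the set of valid solutions; we may then assume $s$ and $t$ lie in the same strong component, or else the instance is trivially NO. We then apply Theorem~\ref{thm:kleaf} in $O^*(3.72^k)$ time to search for an out-branching of $D$ with at least $k+1$ leaves. If no such branching exists, the component containing $s$ has pathwidth $O(k\log k)$ by Theorem~\ref{thm:alon} and Lemma~\ref{lem:dp} finishes the algorithm. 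If such a $T$ is found and happens to be rooted at $s$, Lemma~\ref{lemma:leaves-branch} makes the instance immediately YES: any in-branching rooted at $t$ is $k$-distinct from $T$.

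The difficulty is when $T$ has many leaves but is rooted at some $r \neq s$. To reconfigure $T$ into an $s$-rooted out-branching while retaining enough leaves, I would build the rooted cut decomposition $\hat T$ of $D$ at $s$: a tree whose nodes are certain vertices of $D$, each carrying an associated diblock that captures a piece of $D$ defined through bi-reachability from $s$. The structure of $\hat T$ constrains how an arbitrary out-branching can be rerouted to originate at~$s$, and the analysis splits into three regimes according to $\hat T$. In the \emph{shallow} regime, where $\hat T$ has height bounded by a function of~$k$, an out-branching with sufficiently many leaves can be rerouted along the few cut vertices separating $s$ from $r$, losing only a function of $k$ leaves in the process; choosing the initial leaf budget slightly above $k+1$ and applying Lemma~\ref{lemma:leaves-branch} closes this case.

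In the \emph{non-degenerate} regime, where some root-to-leaf path in $\hat T$ contains $\Omega(k)$ diblocks of size at least three, each such diblock contributes an extra internally vertex-disjoint route, and one can explicitly exhibit a pair of $k$-distinct branchings rooted at $(s,t)$, giving a YES answer directly. The remaining \emph{degenerate} regime, where $\hat T$ contains long chains of two-vertex diblocks, is the main obstacle: here neither the leaf argument nor the multiple-routes argument applies, and I would resolve it by designing two reduction rules that contract such chains while provably preserving the answer, pushing the reduced instance into the shallow regime. Proving these rules sound is the delicate part, because a degenerate chain is a sequence of bridge-like structures whose forced traversal constraints must be faithfully summarised by the contraction and must interact cleanly with the subsequent rerouting argument; once this is in place, the three regimes collectively cover every input and yield the claimed FPT running time via Lemma~\ref{lem:dp}.
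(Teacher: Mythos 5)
Your proposal follows essentially the same roadmap as the paper's proof: add $ts$ and preprocess to a reduced instance, build the $s$-rooted cut decomposition, split into the shallow/non-degenerate-path/degenerate-path regimes, handle the degenerate chains by reduction rules, and close the shallow case via Theorem~\ref{thm:alon}, Lemma~\ref{lem:dp}, and the leaf-counting argument of Lemma~\ref{lemma:leaves-branch}. The only informality is the phrase ``slightly above $k+1$'' for the leaf budget in the shallow regime; after the reduction rules the height bound is $O(k^2)$, so the budget and the resulting running time are polynomial in $k$ rather than near $k$, but this is a quantitative detail and not a gap in the argument.
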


\noindent In the rest of this section, $(D,s,t)$ will stand for an instance of
\textsc{Rooted $k$-Distinct Branchings} (in particular, $D$ is an input digraph of the problem) and $H$ for an
arbitrary digraph. Let us start by observing what further restrictions on $D$
can be imposed  by polynomial-time preprocessing.

\subsection{Preprocessing}\label{sec:preprocessing}
Let $(D,s,t)$ be an instance of \textsc{Rooted $k$-Distinct Branchings}. Recall
that $D$ contains an out-branching (in-branching) if and only if $D$ has only
one strongly connected component with no incoming (no outgoing) arc. As a
first preprocessing step, we can decide in polynomial time whether $D$ has a
rooted out-branching and a rooted in-branching. If not, we reject the
instance. Note that this in particular means that in a non-rejected instance,
every vertex in~$D$ is reachable from~$s$ and $t$ is reachable from every
vertex.

Next, we test for every arc~$a \in D$ whether there exists at least one rooted
in- or out-branching that uses~$a$ as follows: since a maximal-weight out- or
in-branching for an arc-weighted digraph can be computed in polynomial time
\cite{edmondsJRNBSS71}, we can force the arc~$a$ to be contained in a solution by
assigning it a weight of 2 and every other arc weight 1.
If we verify that $a$ indeed does not appears in any rooted out-branching and in-branching,
we remove~$a$ from~$D$ and obtain an equivalent instance of \textsc{Rooted $k$-Distinct Branchings}.

After this polynomial-time preprocessing, our instance has the following
three properties: there exists a rooted out-branching, there exists a rooted 
in-branching, and every arc of~$D$ appears in some rooted in- or out-branching.
We call such a digraph with a pair~$s,t$ \emph{reduced}.

Lastly, the following result of Kneis {\em et al.} \cite{kneisA61}
will be frequently used in our arguments below.

\begin{lemma}\label{lem:treeext}
Let $H=(V,A)$ be a digraph containing an out-branching rooted at $s\in V$. Then every out-tree rooted at $s$ with $q$ leaves  can be extended into an out-branching rooted at $s$ with at least $q$ leaves
in time $O(|V|+|A|)$.
\end{lemma}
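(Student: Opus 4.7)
The plan is to construct the extension $T^*$ by a single graph search (DFS or BFS) seeded at $V(T)$: scan the out-arcs of vertices already included and, whenever an arc $uv$ with $v \notin V(T^*)$ is discovered, add $v$ as a child of $u$ in $T^*$. Since $H$ has an out-branching rooted at $s$ and $s \in V(T)$, every vertex of $V$ is reachable from $V(T)$ in $H$, so this process terminates with a spanning out-tree of $H$ rooted at $s$. With adjacency-list access and a visited-marker the whole procedure runs in $O(|V|+|A|)$ time, matching the claimed bound.

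The real content is the leaf count. I would set $L$ to be the leaf set of the input out-tree $T$, so $|L| = q$, and for each $\ell \in L$ consider the subtree $T^*_\ell$ of $T^*$ rooted at $\ell$. Because $\ell$ was a leaf of $T$, any new arc of the form $\ell v$ inserted during the search points to a vertex $v \in V \setminus V(T)$; by induction on depth every proper descendant of $\ell$ in $T^*$ lies in $V \setminus V(T)$. In particular, the subtrees $\{T^*_\ell\}_{\ell \in L}$ are pairwise vertex-disjoint.

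The final step is to extract one leaf of $T^*$ from each $T^*_\ell$: if $\ell$ is still a leaf in $T^*$ then take $\ell$ itself; otherwise $T^*_\ell$ is a non-trivial finite out-tree and any of its sinks is also a leaf of the ambient tree $T^*$. Disjointness of the $T^*_\ell$ guarantees these $q$ leaves are distinct, so $T^*$ has at least $q$ leaves, as required.

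I do not anticipate a real obstacle here; the only subtlety is upgrading the local statement ``$\ell$ has no descendants in $T$'' to the global statement ``no descendant of $\ell$ in $T^*$ lies in $V(T)$'', which is exactly what forces the subtrees $T^*_\ell$ to be disjoint and makes the leaf counting go through.
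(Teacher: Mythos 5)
Your proof is correct. Note that the paper itself does not prove Lemma~\ref{lem:treeext}; it is cited from Kneis, Langer, and Rossmanith \cite{kneisA61}, so there is no in-paper proof to compare against. What you have written is a complete and valid self-contained argument, and it follows the standard approach for this kind of extension result: grow the given out-tree $T$ by a BFS/DFS seeded at $V(T)$, observe that every vertex added by the search lies outside $V(T)$ (and hence, by induction along the newly created arcs, every proper $T^*$-descendant of an original leaf $\ell$ lies outside $V(T)$), conclude that the subtrees $T^*_\ell$ rooted at the old leaves are pairwise disjoint, and harvest one $T^*$-leaf from each. The disjointness step is indeed the crux, and your remark that the local fact (``$\ell$ had no children in $T$'') must be upgraded to the global fact (``no $T^*$-descendant of $\ell$ is in $V(T)$'') is exactly the right thing to emphasize. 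The running-time claim is also justified: the seeded search touches each vertex and each arc a constant number of times. I see no gaps.
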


\subsection{Decomposition and Reconfiguration}

\noindent We work towards the following win-win scenario: either we find an out-tree
with~$\Theta(k)$ leaves that can be turned into a rooted out-tree with at
least~$k+1$ leaves, or we conclude that every out-tree in~$D$ has less
than~$\Theta(k)$ leaves. We refer to the process of turning an out-tree into a
rooted out-tree as a \emph{reconfiguration}. In the process we will develop a
new digraph decomposition, the \emph{rooted cut-decomposition}, which will aid
us in reasoning about reconfiguration steps and ultimately lead us to a
solution for the problem. In principle we recursively
decompose the digraph into vertex sets that are bi-reachable from a designated
`bottleneck' vertex, but for technical reasons the  following  notion of
a \emph{diblock} results in a much cleaner version of the decomposition.

\begin{definition}
  Let $H$ be a digraph  with at least two vertices, and let $r \in V(H)$ such that every vertex of $H$ is reachable from $r$.
  Let $B \subseteq V(H)$ be the set of all vertices that are bi-reachable from $r$. 
  The \emph{directed block (diblock) $B_r$ of $r$ in $H$} is the set $B \cup N^+[r]$, i.e.,
  the bi-reachable vertices together with all out-neighbors of $r$ and $r$ itself. 
\end{definition}

\noindent 
Note that according to the above definition a diblock must have at least two vertices.

The following statement provides us with an easy case in which a
reconfiguration is successful, that is, we can turn an arbitrary 
out-tree into a rooted out-tree without losing too many leaves. 
Later, the obstructions to this case will be turned into
building blocks of the decomposition.

\begin{lemma}\label{lemma:re-root}
  Let~$B_s \subseteq V(D)$ be the diblock of~$s$ and
  let~$T$ be an out-tree of~$D$ whose root~$r$ lies in~$B_s$ with $\ell$
  leaves. Then there exists a rooted out-tree with at
  least~$(\ell-1)/2$ leaves.
\end{lemma}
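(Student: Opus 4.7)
The plan is to split into cases according to which part of the diblock definition places $r \in B_s$: either $r = s$, or $sr \in A(D)$, or $r$ is bi-reachable from $s$. The first two are essentially trivial. If $r = s$ then $T$ itself is already a rooted out-tree with $\ell \geq (\ell-1)/2$ leaves. If $sr \in A(D)$ but $r \neq s$, I form $T' := T + (s,r)$, additionally deleting the in-arc of $s$ from its $T$-parent $p$ (to break the cycle $s \to r \to \cdots \to s$) in the case $s \in V(T)$. A quick count shows at most one leaf is lost (namely $s$, if it happened to be a leaf of $T$) and possibly one is gained ($p$, if $s$ was its unique $T$-child), so $T'$ is an out-tree rooted at $s$ with at least $\ell - 1 \geq (\ell-1)/2$ leaves.

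The interesting case is when $r$ is bi-reachable from $s$. I pick two internally vertex-disjoint $s$-$r$ paths $Q_1$ and $Q_2$ in $D$ and, from each, build a candidate rooted out-tree $T_i$ as follows: starting from $T \cup Q_i$, for every vertex $v \neq s$ I retain exactly one in-arc, giving priority to the $Q_i$-arc whenever $v$ lies in both $V(T)$ and $V(Q_i)$. To see that $T_i$ is a valid out-tree rooted at $s$, observe that the retained $T$-arcs form a forest (they are all arcs of $T$ minus the in-arcs of the ``choice'' vertices in $V(T) \cap V(Q_i) \setminus \{s, r\}$, plus the in-arc of $s$ if $s \in V(T)$), whose components are rooted at $r$ and at each chosen vertex; the path $Q_i$ then reaches every such component root from $s$, yielding an acyclic spanning out-tree of $V(T) \cup V(Q_i)$ rooted at $s$.

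The leaf bound then comes from a simple averaging. Let $L$ denote the leaf set of $T$. Any $v \in L \setminus V(Q_i)$ has no out-arcs in $T$ and acquires none in $T_i$, so it remains a leaf of $T_i$; hence $T_i$ has at least $\ell - |L \cap V(Q_i)|$ leaves. Since $Q_1$ and $Q_2$ are internally vertex-disjoint, $V(Q_1) \cap V(Q_2) = \{s, r\}$, and since $r$ is not a leaf of $T$ whenever $\ell \geq 2$, the set $L \cap V(Q_1) \cap V(Q_2)$ contains at most the vertex $s$. Inclusion-exclusion then gives $|L \cap V(Q_1)| + |L \cap V(Q_2)| \leq \ell + 1$, so the smaller of the two intersections is at most $(\ell+1)/2$ and the corresponding $T_i$ has at least $\ell - (\ell+1)/2 = (\ell-1)/2$ leaves.

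I expect the main obstacle to be the acyclicity and leaf-counting verification for $T_i$ in the case $Q_i$ shares many vertices with $V(T)$. The key structural observation -- that the choice vertices split $T$ into a forest of subtrees each rooted either at $r$ or at a chosen vertex, all of which lie on $Q_i$ and are therefore reached from $s$ -- cleanly separates the contributions of the two kinds of arcs and makes the acyclicity check transparent; the rest is routine bookkeeping, driven by the two internally vertex-disjoint paths that bi-reachability hands us.
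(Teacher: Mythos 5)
Your proof is correct and takes essentially the same approach as the paper: both pick the one of two internally vertex-disjoint $s$--$r$ paths that meets fewer leaves of~$T$, then stitch that path together with~$T$ into an out-tree rooted at~$s$. Your ``retain exactly one in-arc per vertex, prioritising $Q_i$'' construction and the inclusion--exclusion count are cleaner phrasings of the paper's greedy segment-addition and ``one of the two paths avoids half the remaining leaves'' argument, but the ideas coincide.
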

\begin{proof}
 We may assume that $r\neq s$. In case~$T$ contains~$s$ as a leaf, we remove~$s$ from~$T$ for
  the remaining argument and hence will argue about the~$\ell-1$ remaining leaves.

  If $r$ is bi-reachable from $s$, consider two internally vertex-disjoint
  paths ~$P, Q$ from~$s$ to~$r$. One of the two paths necessarily avoids half
  of the~$\ell-1$ leaves of $ T$; let without loss of generality this path
  be~$P$. Let further~$L$ be the set of those leaves of~$ T$ that do
  \emph{not} lie on~$P$. If $r\in N^+(s)$, let $P=sr$.
  
  We construct the required out-tree $T'$ as follows:  first, add all arcs and
  vertices of~$P$ to~$T'$. Now for every leaf~$v \in L$, let $P_v$ be  the
  unique path   from~$r$ to~$v$ in~$ T$ and let $P'_v$ be the segment of $P_v$
  from the last vertex $x$ of $P_v$ contained in $T$. Add all arcs and
  vertices of~$P'_v$ to~$T'$. Observe that $x\neq v$ as $v$ cannot be in $T'$. Since $P_v$ and thus $P'_v$ contains no leaf of $L$ other than $v$,
  in the end of the process, all vertices of $L$ are leaves of $T'$. Since
  $|L| \geq (\ell-1)/2$, the claim follows.
\end{proof}

\noindent The definition of diblocks can also be understood in terms of network flows: Let $v\neq r.$
Consider the vertex-capacitated version of $H$ where $r$ and $v$ both have capacity 2, and every other vertex has capacity 1, for some $v \in V(H) \setminus \{r\}$.
Then $v$ is contained in the diblock of $r$ in $H$ if and only if the max-flow from $r$ to $v$ equals 2. 
Dually, by Menger's theorem, $v$ is \emph{not} contained in the diblock if and only if there is a 
vertex $u \notin \{r,v\}$ such that all $r$-$v$ paths $P$ intersect $u$. 
This has the following simple consequence regarding connectivity inside a diblock: \looseness-1

\begin{lemma}\label{lemma:bi-reachable-pair}
  Fix~$r \in V(H)$ and let~$B_r \subseteq V(H)$ be the diblock of $r$ in $H$. 
  Then for every pair of distinct vertices
  $x,y \in B_r$, there exist an $r$-$x$-path~$P_x$ and an $r$-$y$-path~$P_y$ that intersect
  only in~$r$.
\end{lemma}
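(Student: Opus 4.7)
The plan is to split on how $x$ and $y$ sit in the diblock. Since $B_r = B \cup N^+[r]$, where $B$ is the set of vertices bi-reachable from $r$, each of $x, y$ is either equal to $r$, an out-neighbor of $r$, or bi-reachable from $r$ (with possible overlap). If $x = r$ (respectively $y = r$), I take $P_x$ to be the trivial path $(r)$ and $P_y$ to be any $r$-$y$-path; these meet only at $r$. So I assume from now on that $x, y \neq r$, and handle three subcases; the first two are immediate, and the third carries the substance of the argument.

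If both $x, y \in N^+(r)$, take $P_x = rx$ and $P_y = ry$. If $x \in N^+(r)$ but $y$ is bi-reachable, take $P_x = rx$ and pick $P_y$ to be one of the two internally vertex-disjoint $r$-$y$-paths guaranteed by bi-reachability. At most one of these paths can contain $x$: since $x \neq r$ and $x \neq y$, if both did then $x$ would be an internal vertex of two internally vertex-disjoint paths, a contradiction. Hence one avoids $x$ entirely and furnishes the required $P_y$.

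The main case, and where I expect the real work, is when both $x$ and $y$ are bi-reachable from $r$. Here the plan is to pass to the auxiliary digraph $H'$ obtained from $H$ by adding a fresh vertex $z$ together with arcs $xz$ and $yz$. Every $r$-$z$-path in $H'$ has the form $Q \cdot (x,z)$ for some $r$-$x$-path $Q$ in $H$, or $Q \cdot (y,z)$ for some $r$-$y$-path $Q$; moreover, two such paths are internally vertex-disjoint in $H'$ iff one uses $x$, the other uses $y$, and the underlying $r$-$x$- and $r$-$y$-paths meet only at $r$. So it suffices to show that $z$ is bi-reachable from $r$ in $H'$. By Menger's theorem, this reduces to ruling out a single vertex $u \in V(H') \setminus \{r, z\}$ that lies on every $r$-$z$-path of $H'$. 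If $u = x$, then every $r$-$y$-path in $H$ would pass through $x$, contradicting bi-reachability of $y$ exactly as in the previous subcase; $u = y$ is symmetric; and for $u \notin \{r, x, y, z\}$ the vertex $u$ would lie on every $r$-$x$-path in $H$, contradicting bi-reachability of $x$. This exhausts the cases, so Menger's theorem delivers the desired pair of paths.
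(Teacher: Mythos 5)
Your proof is correct and uses essentially the same idea as the paper: add an auxiliary vertex $z$ with arcs $xz$ and $yz$, reduce the claim to bi-reachability of $z$ from $r$, and apply Menger's theorem to rule out a single cut vertex by contradicting the diblock membership of $x$ or $y$. The paper handles the cases involving $N^+(r)$ uniformly via its dual characterization of the diblock (namely, that no vertex $w \in B_r \setminus \{r\}$ admits a cut vertex outside $\{r,w\}$), whereas you spell those cases out explicitly, but the substance of the argument is identical.
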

\begin{proof}
  If $r \in \{x,y\}$, then clearly the claim holds since every vertex in $B_r$ is reachable from $r$.
  Otherwise, add a new vertex $z$ with arcs $xz$ and $yz$, and note that the lemma holds if and only if
  $z$ is bi-reachable from $r$. If this is not true, then by Menger's theorem there is a vertex
  $v \in B_r$, $v \neq r$, such that all paths from $r$ to $z$, and hence to $x$ and $y$, go through $v$.
  But as noted above, there is no cut-vertex $v \notin \{x,r\}$ for $r$-$x$ paths, 
  and no cut-vertex $v \notin \{y,r\}$ for $r$-$y$ paths.
  We conclude that $z$ is bi-reachable from $r$, hence the lemma holds.
\end{proof}

\noindent Next, we will use Lemma~\ref{lemma:bi-reachable-pair} to show that 
given a vertex $r$, the set of vertices not in the diblock $B_r$ of $r$ in $H$
partitions cleanly around $B_r$.

\begin{lemma}\label{lemma:partition-bottlenecks}
  Let $r \in V(H)$ be given, such that every vertex of $H$ is reachable from $r$.
  Let $B_r \subset V(H)$ be the diblock of $r$ in $H$. 
  Then $V(H) \setminus B_r$ partitions according to cut vertices in $B_r$, 
  in the following sense: For every $v \in V(H) \setminus B_r$, there is
  a unique vertex $x \in B_r \setminus \{r\}$ such that 
  every path from $r$ to $v$ intersects $B_r$ for the last time in $x$. 
  Furthermore, this partition can be computed in polynomial time.
\end{lemma}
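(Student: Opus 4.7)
The plan is to identify $x$ with the \emph{first cut vertex} for the pair $(r,v)$ in the natural order on $r$-$v$ paths. Since $v \notin B_r$, Menger's theorem tells us that the maximum number of internally vertex-disjoint $r$-$v$ paths in $H$ is~$1$, so at least one cut vertex $w \in V(H) \setminus \{r,v\}$ lies on every $r$-$v$ path. A standard splicing argument shows that any two cut vertices for $(r,v)$ appear in the same order on every $r$-$v$ path (otherwise, concatenating a prefix of one path with a suffix of another would yield an $r$-$v$ walk, and hence a simple $r$-$v$ path, avoiding one of them). Hence a first cut vertex~$x$ is well defined.

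I then verify that $x \in B_r \setminus \{r\}$ and that $x$ is the last $B_r$-vertex on every $r$-$v$ path. For the first, if some $y \notin \{r,x\}$ were a cut vertex for $(r,x)$, then every $r$-$v$ path would contain an $r$-$x$ prefix and hence pass through~$y$, making $y$ a cut vertex for $(r,v)$ appearing strictly before~$x$, a contradiction. Thus $r$ admits two internally disjoint paths to~$x$, so $x \in B_r \setminus \{r\}$. The key step is the second claim: let $P$ be an $r$-$v$ path and suppose some $y \in B_r \cap V(P)$ lies strictly after $x$ on~$P$. Then $y$ is bi-reachable from~$r$; this is immediate if $y$ lies in the bi-reachable set, and in the remaining case $y \in N^+(r)$ the single edge $r \to y$ together with the subpath~$P[r,y]$ (whose interior contains~$x$) already provide two internally disjoint $r$-$y$ paths. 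Pick two internally vertex-disjoint $r$-$y$ paths~$Q_1, Q_2$. Each concatenation $Q_i P[y,v]$ is an $r$-$v$ walk, and every such walk in~$H$ must contain the cut vertex~$x$ (extract a simple $r$-$v$ subpath). But $x \notin V(P[y,v])$ because $x$ precedes~$y$ on the simple path~$P$, so $x \in V(Q_1) \cap V(Q_2) = \{r,y\}$, contradicting $x \neq r, y$.

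Since the vertex $x$ depends only on $(r,v)$, uniqueness follows immediately, giving the required partition. For the polynomial-time part, once $B_r$ has been computed via the flow routine mentioned before the lemma, for each $v \notin B_r$ we obtain $x(v)$ by iterating over $x \in B_r \setminus \{r\}$ and testing, via a reachability check in $H - x$, whether $x$ is a cut vertex for $(r,v)$; the argument above shows that exactly one such $x$ lies in~$B_r$, and it is $x(v)$. The main obstacle is the second claim of the previous paragraph, which requires combining the bi-reachability of the hypothetical later vertex~$y$ with the cut-vertex role of~$x$ to force~$x$ onto two internally disjoint $r$-$y$ paths simultaneously; the remainder is standard Menger-style reasoning.
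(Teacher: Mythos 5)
Your proof is correct, but it takes a genuinely different route from the paper's. The paper argues by contradiction on the conclusion itself: it supposes two $r$-$v$ paths $P_1, P_2$ leave $B_r$ for the last time at distinct vertices $x_1, x_2$, invokes Lemma~\ref{lemma:bi-reachable-pair} to replace the prefixes $P_1[r,x_1]$ and $P_2[r,x_2]$ by paths meeting only at $r$, and then observes that $P_1$ and $P_2$ must first meet again at some vertex $v' \notin B_r$, which would then be bi-reachable---contradicting the definition of $B_r$. You instead identify the desired vertex $x$ explicitly as the \emph{first} $(r,v)$-cut vertex, guaranteed to exist by Menger's theorem since $v$ is neither bi-reachable nor in $N^+[r]$, and then verify directly that $x \in B_r\setminus\{r\}$ and that no $B_r$-vertex can follow $x$ on any $r$-$v$ path (again by a Menger-style argument: a later bi-reachable $y$ would place $x$ on both of two internally disjoint $r$-$y$ paths). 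Your route is a bit longer but more self-contained: it avoids Lemma~\ref{lemma:bi-reachable-pair} entirely, gives an explicit characterization of $x$ as a cut vertex that makes the polynomial-time computation immediate, and handles the $y \in N^+(r)$ edge case carefully, which the paper's terser argument glosses over. One small point worth tightening: in your Step~3 you conclude ``$r$ admits two internally disjoint paths to $x$''---this is only literally true when $rx \notin A(H)$, but when $x \in N^+(r)$ the membership $x \in B_r\setminus\{r\}$ holds trivially by definition of $B_r$, so the conclusion is fine; you should just note both cases. The ordering-of-cut-vertices splicing argument and the final walk-extraction step are both sound.
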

\begin{proof}
  Assume towards a contradiction that for~$v \in V(H) \setminus B_r$ there exist
  two $r$-$v$-paths~$P_1, P_2$ that intersect~$B_r$ for the last
  time in distinct vertices~$x_1, x_2$, respectively. We first observe that $r \notin \{x_1,x_2\}$,
  since the second vertices of $P_1$ and $P_2$ are contained in $B_r$ by definition. 
  By Lemma~\ref{lemma:bi-reachable-pair},
  we may assume that $P_1[r,x_1] \cap P_2[r,x_2] = \{r\}$. But then~$P_1$ and~$P_2$ intersect for the first time
  outside of~$B_r$ in some vertex~$v'$ (potentially in~$v' = v$). This
  vertex is, however, bi-reachable from~$r$, contradicting our construction of~$B_r$.
  Hence there is a vertex $x \in B_r$ such that every path from $r$ to $v$
  intersects $B_r$ for the last time in $x$, with $x\neq r$, and clearly this vertex is unique.
  Finally, the set $B_r$ can be computed in polynomial time,
  and given $B_r$ it is easy to compute for each $x \in B_r$ the set of all vertices
  $v \in V(H)$ (if any) for which $x$ is a cut vertex.  
\end{proof}

\noindent We refer to the vertices $x \in B_r$ that are cut vertices in the above partition
as the \emph{bottlenecks of $B_r$}. Note that $r$ itself is not considered a bottleneck in $B_r$.
Using these notions, we can now define a \emph{cut decomposition}
of a digraph $H$.

\begin{definition}[Rooted cut decomposition and its tree]\label{def:decomp}
  Let~$H$ be a digraph and~$r$ a vertex such that every vertex in~$H$ is
  reachable from~$r$. The \emph{($r$-rooted) cut decomposition} of $H$ 
  is a pair~$(\hat T,\mathcal{B})$ where $\hat T$ is a rooted tree with $V(\hat T) \subseteq V(H)$
  and $\mathcal{B}=\{B_x\}_{x \in \hat T}$, $B_x \subseteq V(H)$ for each $x \in \hat T$, 
  is a collection of diblocks associated with the nodes of $\hat T$,  
  defined and computed recursively as follows.
  \begin{enumerate}
  \item Let $B_r$ be the diblock of $r$ in $H$, and let $L \! \subseteq \! B_r \! \setminus \! \{r\}$ 
    be the set of bottlenecks in $B_r$. Let $\{X_x\}_{x \in L}$ be the corresponding
    partition of the remainder $V(H) \! \setminus \! B_r$.
  \item For every bottleneck $x \in L$, let $(\hat T_x, \mathcal{B}_x)$ be the $x$-rooted cut decomposition
    of the subgraph $D[X_x \cup \{x\}]$.
  \item $\hat T$ is the tree with root node $r$, where $L$ is the set of children of $r$,
    and for every $x \in L$ the subtree of $\hat T$ rooted at $x$ is $\hat{T}_x$. 
  \item Finally, $\mathcal{B}=\{B_r\} \cup \bigcup_{x \in L} \mathcal{B}_x$.    
  \end{enumerate}
  Furthermore, for every node $x \in \hat T$, we define $B_x^*=\bigcup_{y \in \hat{T}_x} B_y$ as the set of all vertices contained in diblocks associated with nodes of the subtree $\hat{T}_x$. 
\end{definition}

\noindent Figure~\ref{fig:decomp} provides an illustration to Definition~\ref{def:decomp}.
\begin{figure}
  \centering\includegraphics[scale=.7]{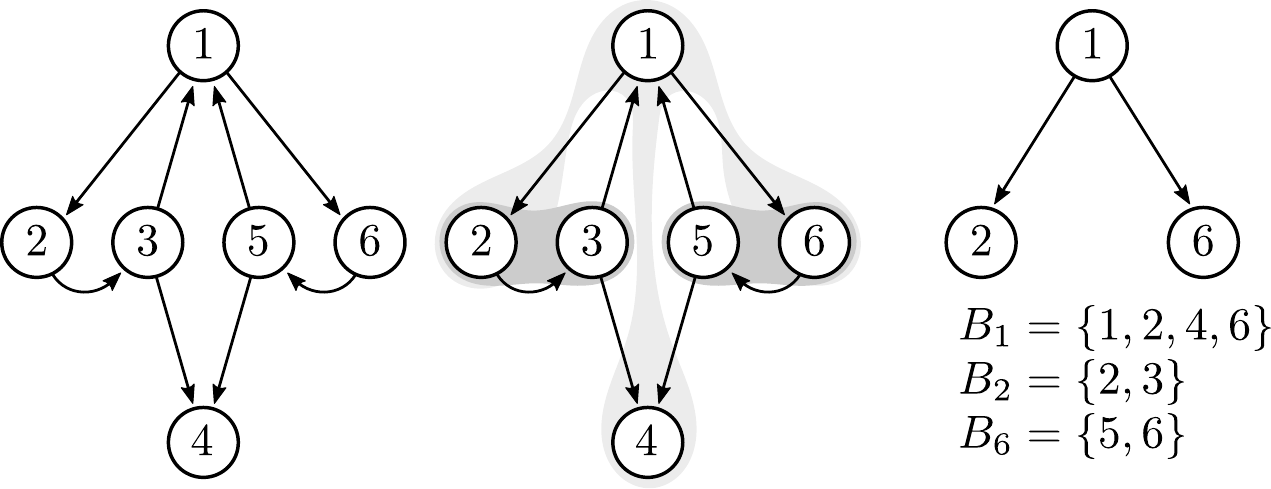}
  \caption{\label{fig:decomp}%
   An example of a rooted cut decomposition.}\vspace*{-.5em}
\end{figure}

\begin{lemma} \label{lemma:decomposition-well-defined}
  Let a digraph $H$ and a root $r \in V(H)$ be given, such that
  every vertex of $H$ is reachable from $r$. 
  Then the $r$-rooted cut decomposition $(\hat T, \{B_x\}_{x \in \hat{T}})$ of $H$ 
  is well-defined and can be computed in polynomial time.
  Furthermore, the diblocks cover $V(H)$, i.e., $\bigcup_{x \in \hat T} B_x=V(H)$,
  and for every node $x \in \hat T$,
    every vertex of $B_x^*$ is reachable from $x$ in $D[B_x^*]$.\looseness-1
\end{lemma}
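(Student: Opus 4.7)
The plan is to prove all four assertions — well-definedness, polynomial-time computability, coverage $\bigcup_{x \in \hat T} B_x = V(H)$, and the reachability of $B_x^*$ from $x$ inside $D[B_x^*]$ — simultaneously by strong induction on $|V(H)|$. The base case is when the diblock $B_r$ of $r$ in $H$ equals $V(H)$: then $L = \emptyset$, no recursion occurs, $\hat T = \{r\}$ and $\mathcal{B} = \{B_r\}$, coverage is immediate, and the reachability claim for the single node $r$ is exactly the hypothesis that every vertex of $H$ is reachable from $r$.

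For the inductive step, suppose $B_r \subsetneq V(H)$. Lemma~\ref{lemma:partition-bottlenecks} produces $B_r$, the bottleneck set $L$, and the partition $\{X_x\}_{x \in L}$ of $V(H) \setminus B_r$ in polynomial time. The crux is to verify that each recursive call on $D[X_x \cup \{x\}]$ satisfies the hypothesis required to reinvoke the lemma, namely that every vertex of $X_x \cup \{x\}$ is reachable from $x$ inside that induced subgraph. Fix $v \in X_x$ and any $r$-to-$v$ path $P$ in $H$; by Lemma~\ref{lemma:partition-bottlenecks}, $P$ meets $B_r$ for the last time at $x$. For any vertex $u$ appearing on $P$ after $x$, the prefix of $P$ ending at $u$ is an $r$-to-$u$ path that likewise meets $B_r$ last at $x$, so by the uniqueness clause of Lemma~\ref{lemma:partition-bottlenecks} we have $u \in X_x$. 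Hence the suffix of $P$ from $x$ to $v$ lies entirely in $X_x \cup \{x\}$, witnessing reachability of $v$ from $x$ in $D[X_x \cup \{x\}]$.

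Since every diblock has at least two vertices, $|B_r| \geq 2$ and therefore $|X_x \cup \{x\}| \leq |V(H)| - 1$ for every $x \in L$, so the inductive hypothesis applies to each recursive call. Coverage then follows from $B_r \cup \bigcup_{x \in L}(X_x \cup \{x\}) = V(H)$ together with $L \subseteq B_r$. Polynomial-time computability is clear because each node of $\hat T$ corresponds to a distinct vertex of $H$ (so $|V(\hat T)| \leq |V(H)|$) and every recursive step does only polynomial work. For the reachability statement: the case $x = r$ is the hypothesis on $H$, while for $x \neq r$ the inductive reachability claim applied to the recursive call rooted at $x$ combined with coverage in that sub-decomposition gives $B_x^* = X_x \cup \{x\}$ together with reachability of every vertex of $B_x^*$ from $x$ in $D[B_x^*] = D[X_x \cup \{x\}]$. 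The main obstacle throughout is precisely the uniqueness step used above — without it, a vertex in $X_x$ could be reachable from $r$ only via a path that escapes $X_x \cup \{x\}$, which would invalidate the recursive decomposition.
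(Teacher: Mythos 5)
Your proof is correct and follows essentially the same route as the paper: recursion on the structure produced by Lemma~\ref{lemma:partition-bottlenecks}, with termination via strict size decrease. You do supply a missing detail the paper glosses over with ``by the definition of the partition''---namely the uniqueness argument showing that the suffix of an $r$-to-$v$ path after its last visit to $B_r$ stays entirely within $X_x \cup \{x\}$---but this is a tightening of the same argument rather than a different approach.
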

\begin{proof}
  By Lemma~\ref{lemma:partition-bottlenecks}, the root diblock $B_r$
  as well as the set $L \subseteq B_r$ of bottlenecks and the partition $\{X_x\}_{x \in L}$
  are well-defined and can be computed in polynomial time. 
  Also note that for each $x \in L$, $r \notin X_x \cup \{x\}$, 
  and every vertex of $H_x:=H[X_x \cup \{x\}]$ is reachable from $x$ in $H_x$ by the definition of the partition.
  Hence the collection of recursive calls made in the construction is well-defined,
  and every digraph $H_x$ used in a recursive call is smaller than $H$, hence the process terminates.
  Finally, for any two distinct bottlenecks $x, y \in L$ we have $V(H_x) \cap V(H_y) = \emptyset$.  
  Thereby, distinct nodes of $\hat T$ are associated with distinct vertices of $H$,
  $|\hat T| \leq |V(H)|$, and the map $x \mapsto B_x$ is well-defined.
  It is also clear that the whole process takes polynomial time.  
\end{proof}

\noindent
We collect some basic facts about cut decompositions.

\begin{lemma}\label{lemma:cut-decomposition-props}
  Let~$H$ be a digraph, $r \in V(H)$ a vertex
  and let $(\hat T,\{B_x\}_{x \in \hat T})$ be the $r$-rooted cut decomposition of $H$. 
  Then the following hold.
  \begin{enumerate}
  \item The sets $\{B_x \setminus \{x\}\}_{x \in \hat T}$ 
    are all non-empty and partition $V(H)\setminus \{r\}$.
  \item For distinct nodes $x, y \in \hat T$, 
    if $x$ is the parent of $y$ in $\hat T$ then $B_x\cap B_y=\{y\}$;
    in every other situation, $B_x \cap B_y=\emptyset$. 
  \item For every node $x \in \hat T$, the following hold:
    \label{item:cut-prop-3}
    \begin{enumerate}
    \item If $y$ is a child of $x$ in $\hat T$, 
      then any arc leading into the set $B_y^*$ from $V(H) \setminus B_y^*$
      will have the form $uy$ where $u \in B_x$.
    \item If $y$, $y'$ are distinct children of $x$ in $\hat T$,
      then there is no arc between $B_y^*$ and $B_{y'}^*$.
    \end{enumerate}
  \end{enumerate}
  In particular, every arc of $H$ is either contained in a subgraph of $H$ induced by 
  a diblock $B_x$, or it is a \emph{back arc}
  going from a diblock $B_y$ to a diblock $B_x$, where $x$
  is an ancestor of $y$ in $\hat T$.
\end{lemma}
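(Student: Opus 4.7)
The plan is to prove all three statements by induction on $|V(H)|$, unrolling the recursion in Definition~\ref{def:decomp}. The engine throughout is Lemma~\ref{lemma:partition-bottlenecks}, which controls both the top-level splitting of $V(H)\setminus B_r$ around the bottlenecks of $B_r$ and, via the inductive hypothesis, the analogous splitting one level deeper inside each $H[X_x\cup\{x\}]$.

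For Part~1, the base case $|V(H)|=2$ is immediate. For the inductive step, $B_r\setminus\{r\}$ is nonempty since every vertex is reachable from $r$ and therefore $r$ has at least one out-neighbour, and Lemma~\ref{lemma:partition-bottlenecks} partitions $V(H)\setminus B_r$ disjointly as $\bigsqcup_{x\in L} X_x$. Applying the inductive hypothesis to each $H[X_x\cup\{x\}]$ partitions $X_x$ as $\bigsqcup_{y\in\hat T_x}(B_y\setminus\{y\})$, and gluing these with $B_r\setminus\{r\}$ yields Part~1. Part~2 is then a direct consequence: disjointness forces $B_y\cap B_z\subseteq\{y,z\}$ for distinct $y,z$, and by the uniqueness in Part~1 each non-root node $v\in\hat T$ belongs to exactly one $B_z\setminus\{z\}$; by construction this unique $z$ is the parent of $v$ in $\hat T$, because $v$ is a bottleneck of the parent's diblock. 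A short case split on whether $y$ is the parent of $z$, or vice versa, or neither, closes Part~2.

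For Part~3, fix a child $y$ of $x$ in $\hat T$ and let $H_x$ be the subgraph on which the decomposition recurses at $x$; then $B_y^*=X_y\cup\{y\}$, where $X_y$ is the bottleneck class of $y$ inside $H_x$. Consider an arc $uv$ of $H$ with $v\in B_y^*$ and $u\notin B_y^*$. Applying the same claim inductively one recursion level up shows that any arc of $H$ targeting $V(H_x)$ from outside must land on $x$, so we may reduce to the case $u\in V(H_x)$. Now suppose toward contradiction that $v\in X_y$; concatenating any $x$-to-$u$ path inside $H_x$ with the arc $uv$ produces an $x$-to-$v$ path whose last visit to $B_x$ is either $u$ itself (if $u\in B_x$) or the bottleneck $y'$ of $u$ (if $u\in X_{y'}$ for a sibling $y'$ of $y$). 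In either sub-case this last $B_x$-vertex differs from $y$ because $u\notin B_y^*$, contradicting the uniqueness of bottleneck guaranteed by Lemma~\ref{lemma:partition-bottlenecks}. Hence $v=y$, which is the form required by 3(a); the same bottleneck-uniqueness argument, applied to a putative arc landing in $X_{y'}$ from inside $B_y^*$, rules out arcs crossing between the interiors of two sibling subtrees and establishes 3(b).

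The ``in particular'' assertion then drops out by a case analysis on an arbitrary arc $uv$ of $H$: either $u$ and $v$ share a diblock and $uv$ lies in the induced subgraph on that diblock, or Part~3 forces the target to be a node of $\hat T$ that is an ancestor of $u$'s home diblock in $\hat T$, so that $uv$ is a back arc. The main subtlety I anticipate is the coordination between the two nested recursion contexts in Part~3: the outer-level statement ``arcs targeting $V(H_x)$ from outside land on $x$'' must be used simultaneously with the very inner-level statement being established, so the induction on $|V(H)|$ has to thread all three properties together. Once this is observed, each recursive call strictly decreases $|V(H)|$ and the argument goes through without further difficulty.
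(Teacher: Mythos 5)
Your overall approach---induction driven by Lemma~\ref{lemma:partition-bottlenecks}, with Lemma~\ref{lemma:bi-reachable-pair} as the path-producing tool---and your treatment of Parts~1 and~2 match the paper. There is, however, a genuine gap in your Part~3(a). Your last-$B_x$-visit argument establishes only that the head of the arc is $y$; it does not establish that the tail $u$ lies in $B_x$, which the claim also requires. Concretely, once $v=y$ the concatenated $x$-to-$v$ path ends \emph{inside} $B_x$, so its last $B_x$-vertex is $y$ itself and Lemma~\ref{lemma:partition-bottlenecks} (which only constrains paths to vertices outside $B_x$) yields no contradiction when $u\in X_{y'}$ for a sibling $y'$. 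Your argument simply does not see this configuration, and the same omission leaks into your Part~3(b), which you prove only for arcs landing in the interiors $X_{y'}$ rather than in $B_{y'}^{*}$. The paper handles exactly this by invoking Lemma~\ref{lemma:bi-reachable-pair} to build two internally vertex-disjoint paths from $x'$ to $v$ (the paths $x'P'uv$ and $P''P$) and concluding that $v$ would otherwise be bi-reachable, which forces $x'=x$, $u\in B_x$ and $v=y$ simultaneously. That two-path construction is not a cosmetic variant of your bottleneck-uniqueness step; it is an additional idea your proof is missing.

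A smaller organizational issue: the reduction ``applying the same claim inductively one recursion level up'' is not an application of your induction hypothesis on $|V(H)|$. It appeals to 3(a) for a shallower parent--child pair of the \emph{same} digraph $H$, so as stated it is circular within your inductive framework. To make the $|V(H)|$-induction honest, you should instead argue directly only for the top-level pair $(r,z)$ and then apply the induction hypothesis to the strictly smaller digraph $H_z = H[X_z\cup\{z\}]$, where $z$ is the child of $r$ with $y\in\hat T_z$; alternatively, replace the $|V(H)|$-induction in Part~3 by induction on the depth of $y$ in $\hat T$.
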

\begin{proof}
  For the first claim, the sets $B_x \setminus \{x\}$ are non-empty by definition; we show the partitioning claim.
  By Lemma~\ref{lemma:partition-bottlenecks},
  for every $v \in V(H) \setminus \{r\}$ either $v \in B_r \setminus \{r\}$
  or there is exactly one bottleneck $x \in B_r$ such that
  $v \in X_x$ in the construction of the decomposition.
  Also note that in the latter case, $v \neq x$ since $x \in B_r$.
  Applying the argument recursively and using that the diblocks
  cover $V(H)$, by Lemma~\ref{lemma:decomposition-well-defined}, 
 we complete the proof of the partitioning claim.

  For the second claim, the partitioning claim implies that
  if $v \in B_x \cap B_y$ for distinct nodes $x, y \in \hat T$,
  then either $v=x$ or $v=y$, i.e., $v$ must be a bottleneck.
  This is only possible in the situation described.

  For Claim 3(b), first consider the diblock $B_r$and the partition
  $\{X_z\}_{z\in L}$ given by Lemma~\ref{lemma:partition-bottlenecks}. To
  prove Claim 3(b) it suffices to show that for any two distinct sets $X_y$,
  $X_{y'}$ of the partition, there is no arc between $X_y$ and $X_{y'}$.
  Suppose for a contradiction that there is such an arc $uv$, $u \in X_y$, $v
  \in X_{y'}$. By Lemma~\ref{lemma:bi-reachable-pair}, there are paths $P_y$
  and $P_{y'}$ in $B_r$ from $r$ to $y$ and $y'$, respectively that intersect
  only in $r$, and by Lemma~\ref{lemma:decomposition-well-defined}, there are
  paths $P_u$ from $y$ to $u$ in $X_y$ and $P_v$ from $y'$ to $v$ in $X_{y'}$.
  But then the paths $P_yP_uuv$ and $P_{y'}P_v$ form two $r$-$v$ paths that
  are internally vertex-disjoint, showing that $v \in B_r$, contrary to our
  assumptions. Since the decomposition is computed recursively, this also
  holds in every internal node of $\hat T$.

  For Claim 3(a), let $uv$ be an arc such that $u\notin B^*_y$ and $v\in B^*_y$.
  Moreover, let $u\in B_{x'}$ and $v\in B_{y'}$. By construction of cut
  decomposition, there is a path $\hat{P}$ from $x'$ to $y'$ in $\hat T$
  containing nodes $x$ and $y$. Let $x''$ be the second node in $\hat P$ (just
  after $x'$). Thus, there is a path $P$ from $x''$ to $v$ in $H$ containing the
  vertices of $\hat P$ apart from $x'$.

  Assume that $u\neq x''$. Then by Lemma~\ref{lemma:bi-reachable-pair}, there is
  an $x'$-$u$-path $P'$ and an $x'$-$x''$-path $P''$ of $H$ which intersect only
  at $x'$. Then $x'P'uv$ and $P''P$ are internally vertex-disjoint paths from
  $x'$ to $v$. This means that $v$ must be in $B_{x'}$, a contradiction unless
  $x'=x$, $u\in B_x$ and $v=y$. If $u= x''$, then $P$ and $uv$ are internally
  vertex-disjoint paths from $u$ to $v$. This means that $v$ must be in
  $B_{x''}$, a contradiction unless $x'=x$ and $v=y$. 
\end{proof}
 
\noindent As we saw, for every diblock $B_y$, $y \in \hat T$, 
any path ``into'' the diblock must go via the bottleneck vertex $y$. 
By induction, for any $v \in B_y$, every node of $\hat T$ from $r$ to $y$ 
represents a bottleneck vertex that is unavoidable for paths from $r$ to $v$.
More formally, the following holds in cut decompositions:

\begin{lemma}\label{lemma:ancestor-bottlenecks}
  Let~$(\hat T,\{B_x\}_{x\in \hat T})$ be the cut decomposition of $H$ rooted at~$r$. Fix a 
  diblock~$B_x$ for~$x \in \hat T$. Consider a path~$P$ in $H$ from~$r$ to~$v \in B_x$
  and let~$x_1 \ldots x_\ell$ be the sequence of bottleneck vertices that~$P$ encounters.
  Then~$\hat P = x_0x_1\ldots x_\ell$ with~$x_0 = r$ is the path from~$r$ to~$x$
  in~$\hat T$. 
\end{lemma}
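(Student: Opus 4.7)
The plan is to iterate Claim~3(a) of Lemma~\ref{lemma:cut-decomposition-props} along the $\hat T$-path from $r$ to $x$. Write this tree path as $r = z_0, z_1, \ldots, z_m = x$. Each $z_i$ (for $i \ge 1$) is a bottleneck of the decomposition (it is a non-root node of $\hat T$, hence a bottleneck of $B_{z_{i-1}}$), and the subtree-unions form a strict nested chain $B_{z_1}^* \supsetneq B_{z_2}^* \supsetneq \cdots \supsetneq B_{z_m}^*$ with $v \in B_x = B_{z_m} \subseteq B_{z_m}^*$.

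The central step is to show that $P$ passes through each $z_i$ in the tree order. By Lemma~\ref{lemma:cut-decomposition-props}(1), the sets $\{B_y \setminus \{y\}\}_{y \in \hat T}$ partition $V(H) \setminus \{r\}$, so $r \notin B_{z_i}^*$ for every $i \ge 1$, while $v \in B_x \subseteq B_{z_i}^*$. Consequently $P$ must cross the cut from $V(H) \setminus B_{z_i}^*$ into $B_{z_i}^*$ at some arc; Claim~3(a) of Lemma~\ref{lemma:cut-decomposition-props} forces that arc to end at the bottleneck $z_i$, so $z_i \in V(P)$. Simplicity of $P$ prevents any $z_i$ from being revisited, and hence $P$ cannot leave and re-enter $B_{z_i}^*$; combined with the nested inclusions, this pins down the order of first entries as $z_1, z_2, \ldots, z_m$.

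Assembling these two facts identifies $\hat P = r, z_1, \ldots, z_m$ with the unique $\hat T$-path from $r$ to $x$. The main subtlety lies in the bookkeeping for the reverse direction: $P$ might in principle dip into a non-ancestor fin of $\hat T$ via a back arc and so touch a bottleneck outside $\{z_1, \ldots, z_m\}$. To rule this out for the recorded sequence, I would invoke Lemma~\ref{lemma:cut-decomposition-props}(3), which restricts inter-subtree arcs so that no arc connects two sibling subtrees and every arc leaving a subtree $B_y^*$ is a back arc into an ancestor diblock. Combined with Claim~3(a), this implies that the only bottlenecks $P$ is forced to cross as entry points into a strictly smaller nested subtree are exactly $z_1, \ldots, z_m$, so the sequence $x_1, \ldots, x_\ell$ of bottleneck vertices encountered coincides with $z_1, \ldots, z_m$ and $\hat P$ is precisely the $\hat T$-path from $r$ to $x$.
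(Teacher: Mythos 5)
Your argument for the ``completeness'' half---that $P$ must pass through each bottleneck $z_1, \ldots, z_m$ on the $\hat T$-path from $r$ to $x$, in that order---is correct and arguably cleaner than the paper's induction. You observe directly that $r \notin B_{z_i}^*$ while $v \in B_x \subseteq B_{z_i}^*$, so $P$ must cross into $B_{z_i}^*$ at some arc, and item~3(a) of Lemma~\ref{lemma:cut-decomposition-props} forces that arc to land on $z_i$; nesting of the sets $B_{z_i}^*$ and simplicity of $P$ then fix the order. This is exactly what the paper's inductive proof establishes and exactly what the downstream uses (Lemma~\ref{lemma:between-bottlenecks}, Corollary~\ref{cor:marked-out-arcs}) need.

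The gap is in your second paragraph, where you assert that item~3 of Lemma~\ref{lemma:cut-decomposition-props} lets you rule out $P$ encountering any bottleneck outside $\{z_1,\ldots,z_m\}$. That does not follow: $P$ can enter a sibling subtree $\hat T_w$ through the forward arc landing on $w$ and later leave via a back arc into an ancestor diblock, touching the bottleneck $w$ in between. Indeed this ``exclusivity'' half of the statement is false. Take $H$ with vertices $\{r,a,b,c\}$ and arcs $ra$, $rc$, $ab$, $bc$: then $c$ is bi-reachable from $r$, so $B_r = \{r,a,c\}$ with $a$ the unique bottleneck (cutting off $b$), and $\hat T$ is the single edge from $r$ to $a$; yet the path $P = r,a,b,c$ to $c \in B_r$ encounters the bottleneck $a$, while the $\hat T$-path from $r$ to $x=r$ is the single node $r$. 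The paper's own proof has the same lacuna---the induction only shows that $P$ \emph{contains} all ancestors of $x$ in $\hat T$---and the lemma statement is an overclaim better phrased as ``the $\hat T$-path from $r$ to $x$ appears in $P$ as a subsequence, in order.'' In your proposal, the unjustified concluding sentence of the second paragraph should simply be dropped.
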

\begin{proof}
  We prove the claim by induction over the depth~$d$ of the vertex~$x$ in~$\hat T$. If~$r = x$
  then any path from~$r$ to~$v \in B_r$ contains~$r$ itself and hence the
  base case for~$d = 0$ holds trivially.

  Consider a diblock~$B_x$, $x \in \hat T$ where~$x$ has distance~$d$ to~$r$ in~$\hat T$
  and let~$y$ be the parent of~$x$ in~$\hat T$.
  We assume the induction hypothesis holds for diblocks at depth~$d-1$, hence
  it holds for~$B_y$ in particular. Because~$x \in B_y$, this implies that
  every path from~$r$ to~$x$ will contain all ancestors of~$x$ in~$\hat T$. 
  Since by construction every path from~$r$ to a vertex~$v \in B_x$ needs to pass 
  through~$x$, the inductive step holds. This proves the claim.
\end{proof}

\noindent
As an immediate consequence, we can identify arcs in cut decompositions that
cannot participate in any rooted out-branching.

\begin{corollary}\label{cor:marked-out-arcs}
  Let~$(\hat T,\{B_x\}_{x\in \hat T})$ be the cut decomposition of $H$ rooted at~$r$ and let
  $R := \{ uv \in A(H) \mid u \in B_x~\text{and}~x \in \hat T_v \}$ be all the
  arcs that originate in a diblock~$B_x$ and end in an ancestor~$v$ of~$x$ on $\hat T$.
  Then for every  out-tree~$T$ rooted at $r$ we have~$A(T) \cap R = \emptyset$.
\end{corollary}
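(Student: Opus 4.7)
The plan is to argue by contradiction. Suppose that some rooted out-tree $T$ contains an arc $uv$ with $u \in B_x$ and $v$ a proper ancestor of $x$ in $\hat T$. I will derive a contradiction by showing that this forces the vertex $v$ to appear twice along the unique $r$-to-$v$ path implied by $T$, violating the tree property.

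Concretely, since $T$ is an out-tree rooted at $r$, there is a unique path $P_u$ in $T$ from $r$ to $u$. Because $u \in B_x$, Lemma~\ref{lemma:ancestor-bottlenecks} applied to $P_u$ (viewed as a path of $H$) says that the bottlenecks encountered by $P_u$ are exactly the vertices on the $\hat T$-path from $r$ to $x$. Since $v$ lies on this $\hat T$-path (it is an ancestor of $x$), we conclude $v \in V(P_u)$.

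Now consider the $r$-to-$v$ walk obtained by appending the arc $uv$ to $P_u$. On the one hand, this shows that in $T$ the vertex $v$ has $u$ as a parent; on the other hand, since $v$ already appears strictly inside $P_u$, the prefix of $P_u$ up to $v$ is a different $r$-to-$v$ path in $T$. This contradicts the fact that in an out-tree every non-root vertex has a unique incoming arc (equivalently, a unique path from the root). Hence no such arc $uv$ can belong to $A(T)$, which is precisely $A(T) \cap R = \emptyset$.

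The argument is essentially a direct application of Lemma~\ref{lemma:ancestor-bottlenecks}, so the only thing to be careful about is the definition: I read $x \in \hat T_v$ to mean $v$ is an ancestor of $x$ in $\hat T$ (in the strict sense, $v \neq x$, since otherwise $uv$ is just an intra-diblock arc and need not be forbidden), so that the ancestor-path from $r$ to $x$ strictly contains $v$ and the double-occurrence argument goes through. No further machinery beyond the earlier lemmas is required.
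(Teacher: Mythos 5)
Your proof is correct and takes essentially the same route as the paper: both invoke Lemma~\ref{lemma:ancestor-bottlenecks} to show that $v$ must already lie on the unique $r$--$u$ path of $T$, so the arc $uv$ would give $v$ two incoming arcs (equivalently, create a cycle) in the out-tree. Your parenthetical exclusion of the case $v = x$ is harmless but also unnecessary — if $u \in B_x$, $u \neq x$, then any $r$--$u$ path in $T$ still passes through $x$, so the same argument forbids the arc $ux$ as well.
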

\begin{proof}
  Fix a bottleneck vertex~$v \in \hat T$ of the decomposition and let the arc~$uv$
  be in an out-tree~$T$ rooted at $r$. There must exist a path~$P_{su}$
  from~$s$ to~$u$ that is part of~$T$. By Lemma~\ref{lemma:ancestor-bottlenecks},
  this path will contain the vertex~$v$. But then~$v$ is an ancestor of~$u$
  in~$T$ and therefore the arc~$uv$ cannot be part of~$T$, which is a contradiction.
\end{proof}

\noindent
The decomposition actually restricts paths even further: a path that starts at
the root and visits two bottleneck vertices~$x,y$ (in this order) cannot
intersect any vertex of $B_y^*$ before visiting $y$ and cannot return to any
set $B_z^*$, $z \in \hat T$, after having left it.

\begin{lemma}\label{lemma:between-bottlenecks}
  Let~$(\hat T,\{B_x\}_{x\in \hat T})$ be the cut decomposition of $H$ rooted at~$r$. Fix a 
  diblock~$B_x$ for~$x \in \hat T$. Consider a path~$P$ from~$r$ to~$v \in B_x$
  and let~$\hat P = x_0 \ldots x_\ell$ be the path from~$r = x_0$ to~$x = x_\ell$
  in~$\hat T$. Let further~$F_0, \ldots, F_\ell$ be the fins of~$\hat P$ in~$\hat T$.
  Then the subpath $P[x_i,x_{i+1}] \setminus \{x_{i+1}\}$ is contained in the
  union of diblocks of $F_i$ for~$0 \leq i < \ell$.
\end{lemma}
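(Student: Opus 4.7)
The plan is to reduce the statement to Lemma~\ref{lemma:ancestor-bottlenecks}, applied not only to the full path $P$ but, more importantly, to each of its prefixes $P[r,u]$ for $u$ on the subpath $P[x_i,x_{i+1}]$. The observation driving the argument is that Lemma~\ref{lemma:ancestor-bottlenecks} already pins down the diblock containing any vertex $u$ in terms of the bottleneck sequence of the path leading to $u$.

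First, I would apply Lemma~\ref{lemma:ancestor-bottlenecks} to $P$ itself to conclude that the bottleneck vertices lying on $P$ are exactly $x_1,\ldots,x_\ell$; hence the subpath $P[x_i,x_{i+1}] \setminus \{x_{i+1}\}$ contains no bottleneck other than $x_i$. Take any vertex $u$ on this subpath. If $u = x_i$ we are done, since $x_i \in B_{x_i}$ and $x_i \in F_i$ by the definition of the fins. Otherwise $u$ is not a bottleneck, so by Lemma~\ref{lemma:cut-decomposition-props}(1) it lies in a unique set $B_z \setminus \{z\}$ for some $z \in \hat T$. I would then apply Lemma~\ref{lemma:ancestor-bottlenecks} a second time, to the prefix $P[r,u]$: its bottleneck sequence coincides with the path from $r$ to $z$ in $\hat T$. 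Since $u$ occurs strictly between $x_i$ and $x_{i+1}$ on $P$, this bottleneck sequence is precisely $x_1,\ldots,x_i$, forcing $z = x_i$. Hence $u \in B_{x_i} \subseteq \bigcup_{y \in F_i} B_y$, as required.

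The main point requiring care is the \emph{compositionality} of the bottleneck sequence: one needs that the sequence of bottlenecks encountered by a prefix $P[r,u]$ is exactly the prefix, up to $u$, of the bottleneck sequence of $P$. This is essentially immediate from the definition of ``bottlenecks encountered'', but since the entire argument hinges on it, it merits an explicit remark. As a by-product, the proof in fact delivers the stronger conclusion $P[x_i,x_{i+1}] \setminus \{x_{i+1}\} \subseteq B_{x_i}$, so the containment in the larger union $\bigcup_{y \in F_i} B_y$ claimed by the lemma leaves considerable slack; presumably this weaker bound is stated because it is the one needed in the sequel.
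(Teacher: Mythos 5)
Your proof is correct and takes a genuinely different, and in fact tighter, route than the paper's. The paper proceeds by exclusion: it invokes Lemma~\ref{lemma:ancestor-bottlenecks} only to establish that $x_1,\ldots,x_\ell$ appear on $P$ in order, and then argues from the arc structure (Lemma~\ref{lemma:cut-decomposition-props}\,(3a)) that the subpath cannot intersect diblocks of~$\hat T_{x_{i+1}}$, and from a ``cannot revisit $x_i$'' observation that it cannot intersect diblocks outside~$\hat T_{x_i}$ for $i\geq 1$, handling~$i=0$ as a separate case. You instead identify, for each non-bottleneck vertex~$u$ on the subpath, the unique diblock containing~$u$ by applying Lemma~\ref{lemma:ancestor-bottlenecks} to the prefix~$P[r,u]$, whose bottleneck sequence is exactly~$x_1,\ldots,x_i$; this forces~$u \in B_{x_i}$. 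The payoff is a uniform argument with no case split on~$i$, and, as you observe, the strictly stronger conclusion $P[x_i,x_{i+1}]\setminus\{x_{i+1}\}\subseteq B_{x_i}$ rather than mere containment in $\bigcup_{y\in F_i}B_y$ (which rules out excursions into sibling subtrees that the paper's statement formally permits). The one ingredient your sketch leans on without spelling out, the compositionality of bottleneck sequences under taking prefixes, is indeed immediate from ``sequence of bottleneck vertices that $P$ encounters'', and you flag it appropriately; no gap there.
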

\begin{proof}
  By Lemma~\ref{lemma:ancestor-bottlenecks} we know that the nodes of~$\hat P$
  appear in~$P$ in the correct order, hence the subpath~$P[x_i,x_{i+1}]$ is well-defined.
  Let us first show that the subpath~$P[x_i,x_{i+1}]\setminus \{x_{i+1}\}$ cannot 
  intersect any diblock associated with~$\hat T_{x_{i+1}}$. By
  Lemma~\ref{lemma:cut-decomposition-props}, the only arcs from~$B_{x_i}$ into
  diblocks below~$x_{i+1}$ connect to the bottleneck~$x_{i+1}$ itself.
  Since~$x_{i+1}$ is already the endpoint of~$P[x_i,x_{i+1}]$, this
  subpath cannot intersect the diblocks of~$\hat T_{x_{i+1}}$. This already proves the
  claim for~$x_0$; it remains to show that it does not intersect diblocks of~$V(\hat T)
  \setminus V(\hat T_{x_i})$ for $i \geq 1$. The reason is similar: since the
  bottleneck~$x_i$ is already part of~$P[x_i,x_{i+1}]$, this subpath
  could not revisit~$B_{x_i}$ if it enters any diblock~$B_y$ for a proper
  ancestor~$y$ of~$x_i$ in $\hat T$. We conclude that therefore it must be, with the
  exception of the vertex~$x_{i+1}$, inside the diblocks of the fin~$F_i$.
\end{proof}

\begin{corollary}\label{cor:avoid-half}
  For every vertex~$u \in V(H)$ and every set~$X \subseteq V(H)\setminus (V(\hat T)
  \cup \{u\})$ of non-bottleneck vertices there exists a path~$P$
  from~$r$ to~$u$ in~$H$ such that~$|P \cap X| \leq |X|/2$.
\end{corollary}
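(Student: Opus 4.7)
The plan is to route from $r$ to $u$ along the tree path in $\hat T$, segment by segment, choosing inside each diblock one of two internally vertex-disjoint paths via an averaging argument. Assume $u \neq r$ (otherwise the empty path works). Let $x \in \hat T$ be the unique node with $u \in B_x$, let $\hat P = x_0 x_1 \ldots x_\ell$ (with $x_0 = r$ and $x_\ell = x$) be the $r$-to-$x$ path in $\hat T$, and let $F_0, \ldots, F_\ell$ be its fins. Write $D_i := \bigcup_{y \in F_i} B_y$ for the union of the diblocks indexed by $F_i$.

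The first step is a structural claim: for $0 \le i < \ell$, every $x_i$-to-$x_{i+1}$ path in $H$ is contained in $D_i$, and every $x_\ell$-to-$u$ path in $H$ is contained in $D_\ell$. The proof mirrors Lemma~\ref{lemma:between-bottlenecks}: an internal vertex lying in $B_{x_{i+1}}^*$ would force the path to pass through $x_{i+1}$ before reaching its endpoint, by Lemma~\ref{lemma:cut-decomposition-props}(3a); and any excursion into a diblock outside $\hat T_{x_i}$ via a back arc to some ancestor of $x_i$ would require re-entering $\hat T_{x_i}$'s diblocks through the already-used vertex $x_i$, contradicting simplicity.

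For each $i \in \{0, \ldots, \ell-1\}$ I construct a segment $Q_i$ from $x_i$ to $x_{i+1}$ as follows. Since $x_{i+1} \in B_{x_i}$, either $x_{i+1}$ is bi-reachable from $x_i$ or it is merely an out-neighbor of $x_i$. In the bi-reachable case, Menger's theorem yields two internally vertex-disjoint $x_i$-to-$x_{i+1}$ paths $Q_i^1$ and $Q_i^2$ in $H$; by the structural claim both lie inside $D_i$, and by internal disjointness
\[
|Q_i^1 \cap X| + |Q_i^2 \cap X| \le |X \cap D_i|,
\]
so the better of the two satisfies $|Q_i \cap X| \le |X \cap D_i|/2$. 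In the out-neighbor-only case, take $Q_i$ to be the direct arc $x_i x_{i+1}$; it has no internal vertex and contributes nothing to $|P \cap X|$. Build the final segment $Q_\ell$ from $x_\ell$ to $u$ the same way, using $D_\ell$ (trivially when $u = x_\ell$).

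Finally, set $P := Q_0 Q_1 \cdots Q_\ell$. Because $\{F_i\}$ partitions $V(\hat T)$, the sets $D_i$ overlap only at the bottlenecks $x_j$; and by Lemma~\ref{lemma:cut-decomposition-props}(1) every non-bottleneck vertex belongs to a unique diblock, so $\sum_{i=0}^{\ell}|X \cap D_i| = |X|$ since $X$ contains no bottleneck. Distinct $Q_i$'s share only the junction bottlenecks $x_j$, so $P$ is a simple $r$-to-$u$ path, and
\[
|P \cap X| = \sum_{i=0}^{\ell} |Q_i \cap X| \le \sum_{i=0}^{\ell} \frac{|X \cap D_i|}{2} = \frac{|X|}{2}.
\]
The main hurdle I foresee is the structural claim, where one must carefully rule out Menger-guaranteed paths that exploit back arcs to wander into ancestor diblocks; this is where the bottleneck structure supplied by Lemmas~\ref{lemma:cut-decomposition-props} and~\ref{lemma:ancestor-bottlenecks} earns its keep.
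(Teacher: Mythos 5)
Your proof is correct and follows essentially the same route as the paper: partition the candidate path at the bottleneck chain $x_0,\ldots,x_\ell$ using the fins, choose within each fin the better of two internally vertex-disjoint segments (or the direct arc when $x_{i+1}$ is merely an out-neighbour of $x_i$), and concatenate. The paper invokes Lemma~\ref{lemma:between-bottlenecks} directly for the confinement of segments to fins, whereas you re-derive it as a standalone structural claim and handle the out-neighbour-only case explicitly, but the argument is the same.
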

\begin{proof}
  Assume that~$u \in B_x$ and let~$\hat P = x_0\ldots x_\ell$ be a path from~$x_0 = r$
  to~$x_\ell = x$ in~$\hat T$. Let further~$F_0,\ldots,F_\ell$ be the fins
  of~$\hat P$ in~$\hat T$  and $U_i$ the union of diblocks associated with~$F_i$, $0 \leq i \leq \ell$. 
  We partition the set~$X$ into~$X_1,\ldots,X_\ell$
  where~$X_i = X \cap U_i$ for~$0 \leq i \leq \ell$. 
Lemma~\ref{lemma:between-bottlenecks} allows us to construct
  the path~$P$ iteratively: any path that leads to~$u$ will
  pass through bottlenecks~$x_i,x_{i+1}$ in succession and visit
  only vertices in $U_i$ in the process. 
  Since there
  are two internally vertex-disjoint paths between~$x_i,x_{i+1}$
  for~$1 \leq i \leq \ell$, we can always choose the path that
  has the smaller intersection with~$X_i$. Stringing these paths
  together, we obtain the claimed path~$P$.
\end{proof}

\noindent
We want to argue that one of the following cases must hold: either the
cut decomposition has bounded height and we can `re-root' any out-tree with 
many leaves into a rooted out-tree with a comparable number of leaves,
or we can directly construct a rooted out-tree with many
leaves. In both cases we apply Lemmas~\ref{lemma:leaves-branch} and~\ref{lem:treeext} to conclude that the instance
has a solution. This approach has one obstacle: internal diblocks of the decomposition
that contain only two vertices.

\begin{definition}[Degenerate diblocks]
    Let~$\{B_x\}_{x \in \hat T}$ be the cut decomposition rooted at~$s$. We call a
    diblock~$B_x$ \emph{degenerate} if~$x$ is an internal node of~$\hat T$
    and~$|B_x| = 2$.
\end{definition}

\noindent
Let us first convince ourselves that a long enough sequence of non-degenerate
diblocks provides us with a rooted out-branching with many leaves.

\begin{lemma}\label{lemma:win-non-degen-path}
  Let~$(\hat T,\{B_x\}_{x\in \hat T})$ be the cut decomposition rooted at~$s$ of $H$ and let
  $y$ be a node in~$\hat T$ such that the path~$\hat P_{sy}$ from~$s$ to~$y$ in~$\hat T$
  contains at least~$\ell$ nodes whose diblocks are non-degenerate.
  Then~$H$ contains an out-tree rooted at $s$ with at least~$\ell$ leaves.
\end{lemma}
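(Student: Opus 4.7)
The plan is to assemble an out-tree $T$ rooted at $s$ by following the path $\hat{P}_{sy} = x_0 x_1 \ldots x_m$ (with $x_0 = s$ and $x_m = y$) in $\hat T$ and attaching one leaf-producing branch at each non-degenerate diblock encountered along the way. Let $I \subseteq \{0, 1, \ldots, m\}$ collect the indices at which $B_{x_i}$ is non-degenerate, so $|I| \geq \ell$ by hypothesis.

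For each $i < m$ I first choose an $x_i$-to-$x_{i+1}$ spine path $P_i$ inside $H[B_{x_i}]$. If additionally $i \in I$ then $x_i$ is internal in $\hat T$, so non-degeneracy forces $|B_{x_i}| \geq 3$, and I can pick an extra vertex $v_i \in B_{x_i} \setminus \{x_i, x_{i+1}\}$ and invoke Lemma~\ref{lemma:bi-reachable-pair} on the diblock $B_{x_i}$ to obtain $P_i$ together with an $x_i$-to-$v_i$ branch $Q_i$ that is internally vertex-disjoint from $P_i$; both paths stay inside $B_{x_i}$ since their internal vertices are all bi-reachable from $x_i$. If moreover $m \in I$, I pick some $v_m \in B_{x_m} \setminus \{x_m\}$ and an $x_m$-to-$v_m$ path $Q_m$ inside $B_{x_m}$, available either as the direct arc $x_m v_m$ (when $v_m \in N^+(x_m)$) or via Lemma~\ref{lemma:bi-reachable-pair} (when $v_m$ is bi-reachable from $x_m$).

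I then take $T$ to be the subgraph of $H$ whose arc set is the union of all the $P_i$ and $Q_i$ above. Lemma~\ref{lemma:cut-decomposition-props}(2) tells us that two diblocks along the spine meet only at a shared bottleneck when they are consecutive on $\hat{P}_{sy}$, so the interiors of the chosen paths are pairwise disjoint and the only vertices shared between them are the bottlenecks $x_1, \ldots, x_m$. A routine in-degree check then shows that $s$ has in-degree zero in $T$ while every other vertex of $T$ has in-degree exactly one, so $T$ is indeed an out-tree rooted at $s$. To count leaves, observe that for each $i \in I$ with $i < m$ the vertex $v_i$ sits strictly inside $B_{x_i}$, hence the $v_i$ are pairwise distinct and each has out-degree zero in $T$; if $m \in I$ the extra $v_m$ contributes a further leaf, and if $m \notin I$ then $x_m$ itself has out-degree zero in $T$ and becomes a leaf instead. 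Either way $T$ has at least $|I| \geq \ell$ leaves. The main obstacle is exactly this bookkeeping step---verifying that $T$ really is an out-tree and that no vertex of $T$ accidentally acquires two incoming arcs---which is where the inter-diblock disjointness given by Lemma~\ref{lemma:cut-decomposition-props} combines crucially with the intra-diblock internal disjointness of Lemma~\ref{lemma:bi-reachable-pair}.
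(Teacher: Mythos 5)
Your construction follows the same high-level idea as the paper (walk down $\hat P_{sy}$, branch off to a spare vertex at each non-degenerate diblock, use Lemma~\ref{lemma:bi-reachable-pair} to get the branch disjoint from the spine), but there is a genuine gap in the disjointness argument: you assert that both $P_i$ and $Q_i$ can be chosen inside $H[B_{x_i}]$, ``since their internal vertices are all bi-reachable from $x_i$.'' Neither half of that is true. The internal vertices of a pair of internally disjoint $x_i$-$x_{i+1}$ and $x_i$-$v_i$ paths need not be bi-reachable from $x_i$ (bi-reachability of an interior vertex $u$ of $P_i$ would require a \emph{second} $x_i$-$u$ path, which the construction does not supply), and more importantly the paths need not lie inside $B_{x_i}$ at all. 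In fact $x_{i+1}$ may fail even to be \emph{reachable} from $x_i$ inside $H[B_{x_i}]$: take $V(H)=\{r,a,b,c,d,w,v\}$ with arcs $ra, ac, cw, rb, bd, dw, wv$. Here $B_r=\{r,a,b,w\}$ (with $w$ bi-reachable via $r,a,c,w$ and $r,b,d,w$), $w$ is a bottleneck, but the induced digraph $H[B_r]$ has only the arcs $ra, rb$, so there is no $r$-$w$ path in $H[B_r]$. So your claimed spine path $P_i$ may simply not exist in the induced diblock, and consequently the appeal to Lemma~\ref{lemma:cut-decomposition-props}(2) (that distinct diblocks are essentially disjoint) does not apply.

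The paper avoids this by working with the subtree unions $B_{x_i}^*$ rather than the diblocks $B_{x_i}$: the two paths $P_{x_ix_i^+}$ and $P_{x_iv_i}$ are taken in $D[B_{x_i}^*]$, where Lemma~\ref{lemma:bi-reachable-pair} (applied to the $x_i$-rooted sub-decomposition) guarantees their existence. Disjointness with the already-built tree then comes not from diblock disjointness but from Lemma~\ref{lemma:ancestor-bottlenecks}: since $x_i$ is currently a leaf of $T_i$, the whole of $T_i$ is disjoint from $B_{x_i}^*\setminus\{x_i\}$, so the new paths (which live in $B_{x_i}^*$) attach cleanly. To repair your proof you would need to replace the ``paths stay in $B_{x_i}$'' claim with this iterative argument, or otherwise justify disjointness by appealing to Lemmas~\ref{lemma:ancestor-bottlenecks} and~\ref{lemma:between-bottlenecks} rather than to Lemma~\ref{lemma:cut-decomposition-props}(2).
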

\begin{proof}
  We construct an $s$-rooted out-tree $T$ by repeated application of Lemma~\ref{lemma:bi-reachable-pair}. 
  Let $x_1$, $\ldots$, $x_\ell$ be a sequence of nodes in $\hat P_{sy}$ whose diblocks are non-degenerate,
  and for each $1 \leq i < \ell$ let $x^+_i$ be the node after $x_i$ in $\hat P_{sy}$.
  We construct a sequence of $s$-rooted out-trees $T_1$, $\ldots$, $T_\ell$ such that 
  for $1 \leq i \leq \ell$, the vertex $x_i$ is a leaf of $T_i$, 
  and $T_i$ contains $i$ leaves. 
  First construct $T_1$ as a path from $s$ to $x_1$, then for every $1 \leq i <\ell$ 
  we construct an out-tree $T_{i+1}$ from $T_i$ as follows.
  Let $v_i \in B_{x_i} \setminus \{x_i,x^+_i\}$, which exists since $B_{x_i}$ is non-degenerate,
  and let $P_{x_ix^+_i}$, $P_{x_iv_i}$ be a pair of paths in $D[B^*_{x_i}]$ 
  from $x_i$ to $x^+_i$ and to $v_i$ respectively, which intersect only in $x_i$. 
  Such paths exist by Lemma~\ref{lemma:bi-reachable-pair}, and
  since $x_i$ is a leaf of $T_i$, Lemma~\ref{lemma:ancestor-bottlenecks}
  implies that $T_i$ is disjoint from $B_{x_i}^* \setminus \{x_i\}$.
  Hence the paths can be appended to $T_i$ to form a new $r$-rooted out-tree 
  $T_{i+1}$ in $H$ which contains a leaf in every diblock $B_{x_j}$, $1 \leq i$. 
  Finally, note that the final tree $T_\ell$ contains two leaves in $B_{x_{\ell-1}}$,
  hence $T_\ell$ is an $r$-rooted out-tree with $\ell$ leaves. 
\end{proof}

\noindent  
The next lemma is important to prove that \textsc{Rooted $k$-Distinct
Branchings} is FPT for a special case of the problem considered in Lemma
\ref{thm:boundedd}.

\begin{lemma}\label{lemma:win-bounded-height}
   Let~$(\hat T,\{B_x\}_{x\in \hat T})$ be the cut decomposition of $D$ rooted at~$s$ such
   that $\hat T$ is of height~$d$  and let $T$ be an out-tree rooted at some
   vertex~$r$ with~$\ell$ leaves. Then we can construct an out-tree~$T_s$
   rooted at~$s$ with at least~$(\ell - d)/ 2$ leaves. \looseness-1
\end{lemma}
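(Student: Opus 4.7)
The plan is to locate the root~$r$ of~$T$ in the decomposition tree~$\hat T$, choose an $s$-to-$r$ path~$P$ in~$D$ that avoids as many leaves of~$T$ as possible using the pigeonhole argument behind Corollary~\ref{cor:avoid-half}, and then graft the remaining $r$-to-leaf paths of~$T$ onto~$P$ using the reconfiguration procedure from the proof of Lemma~\ref{lemma:re-root}.

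By Lemma~\ref{lemma:decomposition-well-defined} there is a unique node~$x_{d'} \in \hat T$ with~$r \in B_{x_{d'}}$; let~$\hat P = x_0 x_1 \ldots x_{d'}$ (with~$x_0 = s$) be the path in~$\hat T$ from~$s$ to~$x_{d'}$, and note that~$d' \le d$ since~$\hat T$ has height~$d$. By Lemma~\ref{lemma:ancestor-bottlenecks} every $s$-to-$r$ path in~$D$ visits the \emph{spine bottlenecks}~$x_0,\ldots,x_{d'}$ in that order, so any leaf of~$T$ coinciding with some~$x_i$ will be unavoidably ``consumed'' as soon as~$P$ is added. Now~$x_0 = s$ need not be counted against us because~$s$ is the root of~$T_s$ and therefore never a leaf of~$T_s$, and if~$r = x_{d'}$ then~$x_{d'}$ is also not counted because~$r$ is the root of~$T$ (so not a leaf of~$T$). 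After these discounts, at most~$d$ leaves of~$T$ are forced to be lost on the spine.

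Next, I would build~$P$ segment by segment. Between each pair of consecutive spine bottlenecks~$x_i, x_{i+1}$ the transition inside the corresponding fin admits, in the non-degenerate case, two internally vertex-disjoint subpaths, and a pigeonhole choice selects one whose interior meets at most half of the non-spine leaves of~$T$ lying in that fin; in the degenerate case the forced subpath has no interior vertex at all, so it contributes nothing. Aggregating across the fins of~$\hat P$, which by Lemma~\ref{lemma:between-bottlenecks} partition the remaining diblocks and hence contain all non-spine leaves of~$T$, the total number of non-spine leaves of~$T$ lying on~$P$ is at most~$(\ell-d)/2$. This is essentially the statement of Corollary~\ref{cor:avoid-half} adapted to~$u = r$.

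Finally, I would initialise~$T_s := P$ and, following the construction from Lemma~\ref{lemma:re-root}, for each leaf~$v$ of~$T$ with~$v \notin V(P)$ let~$P_v$ be the unique~$r$-to-$v$ path in~$T$ and append to~$T_s$ the suffix of~$P_v$ beginning at the last vertex of~$P_v$ already in~$T_s$. Since~$v$ is a leaf of~$T$, and leaves of~$T$ cannot be internal vertices of any other $r$-to-leaf path in~$T$, no later grafting can give~$v$ an outgoing arc in~$T_s$, so~$v$ survives as a leaf. Summing the two losses, at most~$d + (\ell-d)/2 = (\ell+d)/2$ leaves of~$T$ lie on~$P$, leaving at least~$(\ell-d)/2$ as leaves of~$T_s$. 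The main obstacle is the bookkeeping: showing that the boundary cases for~$s$ and~$r$ really absorb into the~$d$ term, that degenerate spine diblocks contribute nothing to the interior count, and that the fin-wise pigeonhole argument goes through even though the target~$r$ may sit inside a diblock rather than at a bottleneck.
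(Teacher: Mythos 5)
Your proposal is correct and follows the paper's proof essentially step by step: locate $r$ in a diblock $B_x$, discount the at most $d$ spine bottlenecks on the path $\hat P_{sx}$, apply Corollary~\ref{cor:avoid-half} with $X$ equal to the set $L'$ of non-spine leaves (and $u=r$) to obtain an $s$-$r$ path $P$ hitting at most half of $L'$, and graft the remaining $r$-to-leaf paths as in Lemma~\ref{lemma:re-root}. One small arithmetic slip: the number of non-spine leaves on $P$ is bounded by $|L'|/2$, not by $(\ell-d)/2$ (since $|L'|$ may exceed $\ell-d$); the correct chain is $|L'\setminus P| \geq |L'|/2 \geq (\ell-d)/2$, which still yields your claimed bound.
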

\begin{proof}
  Assume that~$r$ is contained in the diblock~$B_x$ of the decomposition and
  let~$x_p \ldots x_1 = \hat P_{sx}$ be a path from~$s = x_p$ to~$x = x_1$ in~$\hat T$.
  Let~$L$ be the leaves of~$T$ and let~$L' := L \setminus \hat P_{sx}$. 
  Clearly, $|L'| \geq \ell-d$. Applying Corollary~\ref{cor:avoid-half}
  with~$X = L'$ and~$u = r$, we obtain a path~$P_{sr}$ in $D$ from $s$ to $r$ that avoids
  half of~$L'$. We construct~$T_s$  in a similar fashion to the proof of Lemma~\ref{lemma:re-root}.
  We begin with~$T_s = P_{sr}$, then 
  for every leaf~$v \in L'\setminus P_{sr}$, proceed as follows: 
  let $P_v$ be the unique path from~$r$ to~$v$ in~$T$ and 
  let $P'_v$ be the segment of $P_v$ from the last vertex $x$ of $P_v$ contained in $T_s$. 
  Add all arcs and vertices of~$P'_v$ to~$T_s$. Since $P_v$ and thus $P'_v$
  contains no leaf of $L'$ other than $v$, in the end of the process, all
  vertices of $L'\setminus P_{sr}$ are leaves of $T_s$. Since
  $|L'\setminus P_{sr}| \geq |L'|/2$, we conclude that~$T_s$ contains
  at least~$(\ell - d)/2$ leaves, as claimed.
\end{proof}

\noindent 
Using these results, we are now able to prove that if the height
$d$ of the cut decomposition of $D$ is upper-bounded by a function in
$k$, then \textsc{Rooted $k$-Distinct Branchings} on $D$ is FPT.  
Combined with Lemma~\ref{lemma:win-non-degen-path},  this implies
that the remaining obstacle is cut decompositions with long chains of  
degenerate diblocks, which we will deal with
in Section~\ref{sec:degenerate}.

\begin{lemma}\label{thm:boundedd}
  Let $(\hat T,\{B_x\}_{x\in \hat T})$ the cut decomposition rooted at $s$ of
  height $d$. If $d\le d(k)$ for some function $d(k)=\Omega(k)$ of $k$ only,
  then we can solve \textsc{Rooted $k$-Distinct Branchings} on $D$ in time
  $O^*(2^{O(d(k)\log^2 d(k))})$.
\end{lemma}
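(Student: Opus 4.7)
The plan is a win/win orchestration of the tools developed so far: try to find an out-branching with sufficiently many leaves relative to $d$, and otherwise exploit the pathwidth bound of Theorem~\ref{thm:alon}. Set the leaf threshold $\ell := 2(k+1) + d$, and observe that $\ell = O(d(k))$ since $d \le d(k)$ and $d = \Omega(k)$. As in Section~\ref{sec:preprocessing} and the introduction, I assume $D$ is reduced and strongly connected (adding the arc $ts$ if necessary, which affects neither the set of rooted $k$-distinct branchings nor the $s$-rooted cut decomposition, since $ts$ cannot appear in any $s$-rooted out-branching or $t$-rooted in-branching). The first step is to invoke the algorithm of Theorem~\ref{thm:kleaf} on $D$ to decide whether $D$ has any out-branching with at least $\ell$ leaves; this costs $O^*(3.72^\ell) = O^*(2^{O(d(k))})$.

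In the positive case, let $T$ be the returned out-branching, rooted at some arbitrary vertex $r$. Lemma~\ref{lemma:win-bounded-height} applied to $T$ with the given decomposition of height $d$ yields an $s$-rooted out-tree $T_s$ with at least $(\ell-d)/2 \ge k+1$ leaves. Extend $T_s$ to an $s$-rooted out-branching $T^+$ via Lemma~\ref{lem:treeext}, whose hypothesis is met because preprocessing guarantees that $D$ admits an $s$-rooted out-branching; this extension preserves the $\ge k+1$ leaves. Since preprocessing also guarantees a $t$-rooted in-branching $T^-$, Lemma~\ref{lemma:leaves-branch} certifies that $T^+$ and $T^-$ are $k$-distinct, and the algorithm outputs YES with the pair $(T^+,T^-)$.

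In the negative case, $D$ has no out-branching with $\ell$ leaves, so Theorem~\ref{thm:alon} (applicable because $D$ is strongly connected) bounds the undirected pathwidth, and hence the treewidth, of $D$ by $O(\ell \log \ell) = O(d(k) \log d(k))$. Lemma~\ref{lem:dp} then solves the instance in time $O^*(2^{O(d(k) \log d(k) \cdot \log(d(k) \log d(k)))}) = O^*(2^{O(d(k) \log^2 d(k))})$, which dominates the cost of the first step.

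I do not foresee a substantial obstacle: the proof is essentially a direct assembly of Lemmas~\ref{lemma:leaves-branch}, \ref{lem:treeext}, \ref{lemma:win-bounded-height}, and~\ref{lem:dp} together with Theorems~\ref{thm:alon} and~\ref{thm:kleaf}. The only bookkeeping is to verify that the threshold $\ell$ is $O(d(k))$ so that both branches meet the claimed running time, which is guaranteed by the assumption $d = \Omega(k)$.
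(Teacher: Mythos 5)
Your proof is correct and follows essentially the same approach as the paper: set a leaf threshold of about $2k+2+d$, run the $k$-leaf out-branching algorithm (Theorem~\ref{thm:kleaf}), in the positive case re-root via Lemma~\ref{lemma:win-bounded-height}, extend via Lemma~\ref{lem:treeext}, and conclude by Lemma~\ref{lemma:leaves-branch}; in the negative case invoke Theorem~\ref{thm:alon} and Lemma~\ref{lem:dp}. You are in fact slightly more careful than the paper in one spot, namely in explicitly justifying the applicability of Theorem~\ref{thm:alon} by noting that the arc $ts$ can be added to ensure strong connectivity without altering the $s$-rooted cut decomposition or the solution set.
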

\begin{proof} 
  By Theorem~\ref{thm:kleaf}, in time $O^*(2^{O(d(k))})$ we can decide whether
  $D$ has an out-branching with at least $2k+2+d(k)$ leaves. If $D$ has such an
  out-branching, then by Lemma~\ref{lemma:win-bounded-height} $D$ has a rooted
  out-tree with at least $k+1$ leaves. This out-tree can be extended to a rooted
  out-branching with at least $k+1$ leaves by Lemma~\ref{lem:treeext}. So by
  Lemma~\ref{lemma:leaves-branch}, $(D,s,t)$ is a positive instance if and only
  if $D$ has a rooted in-branching, which can be decided in polynomial time.

  If $D$ has no out-branching with at least $2k+2+d(k)$ leaves, by
  Theorem~\ref{thm:alon} the pathwidth of $D$ is $O(d(k)\log d(k))$ and thus by
  Lemma~\ref{lem:dp} we can solve  \textsc{Rooted $k$-Distinct Branchings} on
  $D$ in time $O^*(2^{O(d(k)\log^2 d(k))})$. (Note that for the dynamic
  programming algorithm of Lemma~\ref{lem:dp} we may fix roots of all out-
  branchings and all in-branchings of $D$ by adding arcs $s's$ and $tt'$ to $D$,
  where $s'$ and $t'$ are new vertices.)
\end{proof}

\subsection{Handling degenerate diblocks}\label{sec:degenerate}

The following is the key notion for our study of degenerate diblocks. 

\begin{definition}[Degenerate paths] Let~$(\hat T,\{B_x\}_{x\in \hat T})$ be a cut decomposition of $D$.  We call a path~$\hat P$ in~$\hat T$ \emph{monotone} if it is a subpath of a path from the root of~$\hat T$ to some leaf of~$\hat T$.
  We call a path~$\hat P$ in~$\hat T$
  \emph{degenerate} if it is monotone and every diblock~$B_x$, $x \in \hat P$ is degenerate.
\end{definition}

\noindent Let $(D,s,t)$ be a reduced instance of \textsc{Rooted $k$-Distinct Branchings}.
As observed in Section~\ref{sec:preprocessing}, we can verify in polynomial time whether an
arc participates in \emph{some} rooted in- or out-branching. Let~$R_s \subseteq A(D)$
be those arcs that do not participate in any rooted out-branching and $R_t \subseteq A(D)$
those that do not participate in any rooted in-branching. Since $(D,s,t)$ is a reduced instance, we necessarily
have that~$R_s \cap R_t = \emptyset$, a fact we will use frequently in the following.
Corollary~\ref{cor:marked-out-arcs} provides us with an important subset of~$R_s$: every
arc that originates in a diblock~$B_x$ of the cut decomposition and ends in a bottleneck vertex
that is an ancestor of~$x$ on $\hat T$ is contained in~$R_s$.

Let us first prove some basic properties of degenerate paths.

\begin{lemma}\label{lemma:degen-path-props}
  Let~$(\hat T,\{B_x\}_{x\in \hat T})$ be the cut-decomposition of $D$ rooted at~$s$,
  and let~$\hat P = x_1\ldots x_\ell$ be a degenerate path of~$\hat T$.
  Then the following properties hold:
  \begin{enumerate}
    \item Every rooted out-branching contains~$A(\hat P)$,
    \item every arc~$x_jx_i$ with~$j > i$ is contained in~$R_s$, and
    \item there is no arc from~$x_i$ $(i<\ell)$ to~$B_y$ in $D$, 
          where~$y$ is a descendant of~$x_i$ on $\hat T$, except for the arc $x_ix_{i+1}$.
  \end{enumerate}
\end{lemma}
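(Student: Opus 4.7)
The plan is to extract one structural consequence of degeneracy and then derive all three parts from it using Lemma~\ref{lemma:cut-decomposition-props} and Corollary~\ref{cor:marked-out-arcs}.

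The key observation is that for each $i \in \{1,\ldots,\ell-1\}$, since $B_{x_i}$ is degenerate and $x_{i+1}$ is its child bottleneck along $\hat P$, we must have $B_{x_i} = \{x_i, x_{i+1}\}$. Since by definition $B_{x_i} \supseteq N^+[x_i]$, this forces $x_{i+1}$ to be the unique out-neighbor of $x_i$ in $D$: otherwise either a second out-neighbor, or the two starting out-neighbors in a bi-reachability certificate for $x_{i+1}$, would push $|B_{x_i}|$ past $2$. In particular the arc $x_ix_{i+1}$ exists in $D$, and $x_{i+1}$ is the unique child of $x_i$ in $\hat T$. This single observation drives all three claims.

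For Claim~1, fix any rooted out-branching $T^+$ and any $i < \ell$, and let $u$ be the parent of $x_{i+1}$ in $T^+$. The unique $s$-to-$x_{i+1}$ path in $T^+$ starts at $s = x_1 \notin B^*_{x_{i+1}}$ and must enter $B^*_{x_{i+1}}$; by Lemma~\ref{lemma:cut-decomposition-props}(3a) it can do so only at the bottleneck $x_{i+1}$ and only via an arc originating in $B_{x_i}$. Hence $u \in B_{x_i} = \{x_i, x_{i+1}\}$, and since $u \neq x_{i+1}$ we conclude $u = x_i$, so $x_ix_{i+1} \in A(T^+)$. For Claim~2, an arc $x_jx_i$ with $j > i$ originates in $B_{x_j}$ and ends at $x_i$, which is an ancestor of $x_j$ in $\hat T$; Corollary~\ref{cor:marked-out-arcs} therefore places this arc in $R_s$.

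For Claim~3, consider an arc $x_iz$ with $z \in B_y$ and $y$ a descendant of $x_i$ in $\hat T$. If $y = x_i$, then $z \in B_{x_i} = \{x_i, x_{i+1}\}$ forces $z = x_{i+1}$. If $y$ is a proper descendant of $x_i$, then because $x_{i+1}$ is the only child of $x_i$ we have $y \in \hat T_{x_{i+1}}$ and hence $z \in B^*_{x_{i+1}}$, while the partitioning in Lemma~\ref{lemma:cut-decomposition-props}(1)--(2) gives $x_i \notin B^*_{x_{i+1}}$. Applying part~(3a) once more pins $z$ down to $x_{i+1}$, so the arc is $x_ix_{i+1}$ as required. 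Overall the three parts are bookkeeping consequences of the degeneracy observation combined with previously established structure lemmas, and I do not anticipate a genuine obstacle; the one spot warranting a little care is Claim~3, where one must rule out arcs into arbitrary descendant diblocks and not merely into the immediate child diblock $B_{x_{i+1}}$, but uniqueness of the child collapses that general case back to the immediate one.
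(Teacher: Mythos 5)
Your proof is correct and follows essentially the same approach as the paper's, deriving all three parts from the structural properties of the cut decomposition (Lemma~\ref{lemma:cut-decomposition-props} and Corollary~\ref{cor:marked-out-arcs}); the paper invokes Lemma~\ref{lemma:ancestor-bottlenecks} for part~1, but the substance is the same. Two small expository slips that do not affect validity: $x_{i+1}$ is the unique out-neighbour of $x_i$ only in $D[B^*_{x_i}]$ (equivalently, the unique non-back-arc out-neighbour), since $x_i$ may still have back arcs to ancestor diblocks, though you only use the correct consequences (existence of $x_ix_{i+1}$ and uniqueness of the child $x_{i+1}$ in $\hat T$); and in part~1 the parenthetical $s = x_1$ is not generally true (a degenerate path need not begin at the root), but what you actually need, namely $s \notin B^*_{x_{i+1}}$, holds regardless.
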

\begin{proof}
  First observe that, by definition of a degenerate path, $\hat
  P$ is a path in $D$. Every rooted out-branching contains in particular  the
  last vertex~$x_\ell$ of the path. By Lemma~\ref{lemma:ancestor-bottlenecks},
  it follows that $\hat P$ is contained in the out-branching as a monotone
  path, hence it contains~$A(\hat P)$. Consequently, no `back-arc' $x_jx_i$
  with~$j > i$ can be part of a rooted out-branching and thus it is contained
  in~$R_s$. For the third property, note that all arcs from $x_i$ except back
  arcs are contained in $B_{x_i}$. Since $B_{x_i}$ is degenerate there can be
  only one such arc.
\end{proof}

\noindent
For the remainder of this section, let us fix a single degenerate path~$\hat P = x_1\ldots x_\ell$.
We categorize the arcs incident to~$\hat P$ as follows:
\begin{enumerate}
  \item Let~$A^+$ contain all `upward arcs' that originate in~$\hat P$ and
        end in some diblock~$B_y$ where~$y$ is an ancestor of~$x_1$,
  \item let~$A^0$ contain all `on-path arcs' $x_jx_i$, $j > i$, and
  \item let~$A^-$ contain all `arcs from below' that originate from some diblock~$B_y$
        where~$y$ is a (not necessarily proper) descendant of~$x_\ell$.
\end{enumerate}
By Lemma~\ref{lemma:degen-path-props}, this categorization is complete: no
other arcs can be incident to~$\hat P$ in a reduced instance. By the same lemma,
we immediately obtain that~$A^0,A^- \subseteq R_s$.
We will now apply certain reduction rules to~$(D,s,t)$ and prove in the
following that they are safe, with the goal of bounding the size of~$\hat P$
by a function of the parameter~$k$.

\vspace{2mm}
\noindent{\textbf{Reduction rule 1:}} If there are two arcs~$x_iu, x_ju \in A^+ \cap R_t$ with~$i < j$, remove the arc~$x_ju$.
\vspace{2mm}

\begin{lemma}
  Rule 1 is safe.
\end{lemma}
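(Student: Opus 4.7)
The plan is to verify that the instance $(D - x_ju, s, t)$ is equivalent to $(D,s,t)$. One direction is immediate: any solution in the smaller digraph $D - x_ju$ is also a valid solution in $D$. The substantive direction is to show that if $(T^+, T^-)$ is a solution in $D$, then some pair $(T'^+, T^-)$ is a solution in $D - x_ju$. Because $x_ju \in R_t$, the arc $x_ju$ participates in no rooted in-branching, so $x_ju \notin A(T^-)$ and $T^-$ is already a rooted in-branching of $D - x_ju$. Hence the only issue is when $x_ju \in A(T^+)$, in which case we perform the swap $T'^+ := (T^+ - x_ju) + x_iu$.

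To show $T'^+$ is a rooted out-branching of $D - x_ju$, first note that by Lemma~\ref{lemma:degen-path-props}(1) the arcs $A(\hat P)$ are forced in every rooted out-branching, so $T^+$ contains the path $s \to \cdots \to x_1 \to \cdots \to x_i \to \cdots \to x_j$. In particular, $x_i$ is an ancestor of $x_j$ in $T^+$. Moreover, $u$ cannot be an ancestor of $x_j$ in $T^+$, since the arc $x_ju \in A(T^+)$ would then create a cycle; consequently $u$ is not an ancestor of $x_i$ either. Removing $x_ju$ therefore separates $T^+$ into exactly two components: the one containing $s$, which still contains $x_i$, and the subtree $T^+_u$ rooted at $u$. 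Inserting $x_iu$ gives $u$ a new parent $x_i$ lying in the $s$-component, reconnecting the tree. Every vertex keeps in-degree exactly one except $s$ (which keeps in-degree zero), so $T'^+$ is a spanning out-tree of $D - x_ju$ rooted at $s$.

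It remains to check that $(T'^+, T^-)$ is $k$-distinct. Because $x_iu \in R_t$ we have $x_iu \notin A(T^-)$, and because the only in-arc of $u$ in $T^+$ is $x_ju$ we also have $x_iu \notin A(T^+)$. Writing $S := A(T^+) \setminus A(T^-)$, we therefore get
\[
  A(T'^+) \setminus A(T^-) \;=\; \bigl(S \setminus \{x_ju\}\bigr) \cup \{x_iu\},
\]
so $|A(T'^+) \setminus A(T^-)| = |S| \ge k$, using that $x_ju \in S$ (as $x_ju \in R_t$) and that $x_iu$ is genuinely new. Hence $(T'^+, T^-)$ witnesses that $(D - x_ju, s, t)$ is a positive instance, which establishes the safety of Rule 1. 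The only part that needs careful attention is the acyclicity check during the swap, which is handled cleanly by invoking Lemma~\ref{lemma:degen-path-props}(1) to fix the relative position of $x_i$, $x_j$ and $u$ in $T^+$.
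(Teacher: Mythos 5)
Your proof is correct and takes essentially the same approach as the paper's: exchange the arc $x_ju$ for $x_iu$, using Lemma~\ref{lemma:degen-path-props} to ensure $\hat P$ is forced into every rooted out-branching (so that $x_i$ lies on the $s$-side after removing $x_ju$), and using $x_iu, x_ju \in R_t$ to keep the in-branching and the distinctness count unchanged. The paper's version is more terse, relying on the two observations that no rooted in-branching contains either arc and no out-branching contains both; you simply spell out the acyclicity and counting details the paper leaves implicit.
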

\begin{proof}
  Since~$(D,s,t)$ is reduced, the arcs~$x_iu$ and~$x_ju$
  cannot be in~$R_s$. Pick any rooted out-branching~$T$ that contains the arc~$x_ju$.
  By Lemma~\ref{lemma:degen-path-props}, we have that~$\hat P \subseteq T$, therefore
  we can construct an out-branching~$T'$ by exchanging the arc~$x_ju$ for
  the arc~$x_iu$. Since a) no rooted in-branching contains either of these two arcs, and
  b) no out-branching can contain both,
   we conclude that~$(D\setminus\{x_ju\},s,t)$ is equivalent to~$(D,s,t)$
  and thus Rule 1 is safe.
\end{proof}

\begin{corollary}\label{cor:few-useless-up-arcs}
  Let~$(D,s,t)$ be reduced with respect to Rule 1. Then we either find
  a solution for~$(D,s,t)$ or $|A^+ \cap R_t| \leq 2k+1$.
\end{corollary}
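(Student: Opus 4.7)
The plan is to show that whenever $|A^+ \cap R_t| \geq 2k+2$, we can exhibit a rooted out-branching of $D$ with at least $k+1$ leaves; together with any rooted in-branching of $D$ (which exists because $(D,s,t)$ is reduced), Lemmas~\ref{lemma:leaves-branch} and~\ref{lem:treeext} then yield a $k$-distinct pair and hence a solution. After Rule~1 is exhausted, each vertex receives at most one arc of $A^+\cap R_t$, so setting $m:=|A^+\cap R_t|\geq 2k+2$ gives pairwise distinct targets $u_1,\ldots,u_m$, each reached by an arc $x_{i_j}u_j$ with $x_{i_j}\in V(\hat P)$.

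Next I would record three structural restrictions on each $u_j$. First, $u_j\neq s$, since every arc into $s$ lies in $R_s$ (a rooted out-branching has in-degree zero at its root) and $R_s\cap R_t=\emptyset$ in a reduced instance. Second, $u_j\notin V(\hat P)$: otherwise $x_{i_j}u_j$ would be an on-path arc belonging to $A^0$ rather than $A^+$. Third, if $u_j\in V(\hat T)$ then $u_j$ is not an ancestor of $x_1$ in $\hat T$: otherwise Corollary~\ref{cor:marked-out-arcs} would catch $x_{i_j}u_j$ in $R_s$, again contradicting reducedness. In particular, every bottleneck target sits in a subtree of $\hat T$ that branches off the $s$-to-$x_1$ path.

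With these restrictions in place, I build the desired out-tree. Split the targets into bottlenecks $N$ and non-bottlenecks $M$, so $|N|+|M|=m$. Apply Corollary~\ref{cor:avoid-half} with $X=M$ and terminal $x_1$ to obtain a path $P_0$ from $s$ to $x_1$ in $D$ with $|V(P_0)\cap M|\leq |M|/2$. By Lemma~\ref{lemma:ancestor-bottlenecks} the nodes of $\hat T$ that $P_0$ traverses are exactly the ancestors of $x_1$ on the $s$-to-$x_1$ path of $\hat T$, so the third restriction yields $V(P_0)\cap N=\emptyset$. Define
\[
  T \;:=\; P_0 \,\cup\, \hat P \,\cup\, \bigl\{\, x_{i_j}u_j : u_j\notin V(P_0)\,\bigr\}.
\]
Every appended target is distinct from every other (Rule~1), disjoint from $V(\hat P)$ (second restriction), and disjoint from $V(P_0)$ (by construction), so it acquires in-degree one and out-degree zero in $T$. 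Thus $T$ is a valid out-tree rooted at $s$ in which every appended $u_j$ is a leaf, and the leaf count is at least
\[
  |N|+|M|-|V(P_0)\cap M| \;\geq\; |N|+|M|/2 \;\geq\; m/2 \;\geq\; k+1.
\]
Extending $T$ via Lemma~\ref{lem:treeext} and invoking Lemma~\ref{lemma:leaves-branch} then finishes the argument.

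The main obstacle throughout is bookkeeping the leaves: one has to rule out that any given $u_j$ gets secretly consumed by the path from $s$ to $x_1$, by $\hat P$, or by the root $s$ itself. Rule~1 guarantees distinctness of the targets, and the three structural restrictions together with Corollary~\ref{cor:avoid-half} are exactly what rules out every such coincidence, salvaging at least half of the targets as genuine leaves.
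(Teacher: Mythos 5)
Your proof is correct and follows essentially the same strategy as the paper's: show that after Rule~1 the heads of $A^+\cap R_t$ arcs are distinct, route a path from $s$ down the degenerate chain avoiding half those heads via Corollary~\ref{cor:avoid-half}, hang the surviving heads off as leaves, and invoke Lemmas~\ref{lem:treeext} and~\ref{lemma:leaves-branch}. The only deviation is in how you handle potential bottleneck heads: the paper asserts (rather tersely) that $H$ contains no bottleneck vertices at all, so that Corollary~\ref{cor:avoid-half} applies directly with $X=H$; you instead split $H$ into bottleneck heads $N$ and non-bottleneck heads $M$, apply the corollary only to $M$, and observe separately that $P_0$ meets no vertex of $N$ because the bottlenecks on $P_0$ are exactly the ancestors of $x_1$ while no target in $N$ is such an ancestor (by Corollary~\ref{cor:marked-out-arcs} and reducedness). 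Your version is slightly more roundabout but also more self-contained, since it doesn't rely on fully proving that $N=\emptyset$ -- which is true (any bottleneck head on a sibling branch would contradict Lemma~\ref{lemma:cut-decomposition-props}~3(b)), but the paper leaves this partly implicit. Both arguments yield the same leaf bound of $m/2\geq k+1$.
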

\begin{proof}
  Let~$H$ be the heads of the arcs~$A^+ \cap R_t$. Since Rule~1
  was applied exhaustively, no vertex in~$H$ is the head of
  two arcs~$A^+ \cap R_t$; therefore we have that~$|H| = |A^+ \cap R_t|$.

  Note that any arc in~$A^+ \cap R_t$ cannot be contained in
  $R_s$, therefore $H$ does not contain any bottleneck vertices.
  Applying Corollary~\ref{cor:avoid-half}, we can find a path~$P_\ell$
  from~$s$ to~$x_\ell$ that avoids half of the vertices in~$H$. 
  Thus we can add half of~$H$ as leaves to~$P_\ell$ using the 
  arcs from~$A^+ \cap R_t$. Thus if~$|H| \geq 2k+2$, we obtain
  a rooted out-tree with at least~$k+1$ leaves,
  which by Lemmas~\ref{lemma:leaves-branch}
  and~\ref{lem:treeext} imply that the
  original instance has a solution. We conclude that otherwise
  $|H| = |A^+ \cap R_t| \leq 2k+1$.
\end{proof} 

\begin{lemma}\label{lemma:A-plus-tails} Let~$\hat P = x_1\ldots x_\ell$ be a degenerate path.
  Assume~$t \not \in \hat P$ and that $(D,s,t)$ is reduced with respect to Rule 1. Let
  further~$X \subseteq V(\hat P)$ be those vertices of~$\hat P$ that are tails of the arcs
  in~$A^+$. We either find that~$(D,s,t)$ has a solution or that~$|X| \leq 3k+1$.
\end{lemma}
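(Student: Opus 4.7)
Following the template of Corollary~\ref{cor:few-useless-up-arcs}, partition $X$ into $X_1$, the set of tails in $X$ of arcs from $A^+ \cap R_t$, and $X' := X \setminus X_1$. By Corollary~\ref{cor:few-useless-up-arcs} we may assume $|X_1| \le |A^+ \cap R_t| \le 2k+1$, so it suffices to show that either we find a solution or $|X'| \le k$, which together gives $|X| \le 3k+1$. Every $y \in X'$ has, by the definition of $X'$, at least one upward arc $y u_y \in A^+ \setminus R_t$, which consequently participates in some rooted in-branching.

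Suppose $|X'| \ge k+1$ and pick $y_1, \ldots, y_{k+1} \in X'$ in positional order along $\hat P$, at positions $i_1 < \cdots < i_{k+1}$. We aim to produce a rooted in-branching $T^-$ with at least $k+1$ leaves; combined with the rooted out-branching guaranteed by reducedness, this yields a $k$-distinct pair by Lemma~\ref{lemma:leaves-branch}. The candidate leaves are $z_j := x_{i_j+1}$ for $i_j < \ell$ (with $y_{k+1}$ itself taking the role of $z_{k+1}$ if $i_{k+1} = \ell$). The in-branching $T^-$ is defined by assigning an out-arc to every non-$t$ vertex: each $y_j$ receives its upward arc $y_j u_j$; every other on-$\hat P$ vertex $x_i$ with $i < \ell$ receives the forward degenerate-path arc $x_i x_{i+1}$; and all remaining vertices inherit out-arcs from an arbitrary fixed rooted in-branching $T^-_0$, with local modifications where required.

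The leaf condition for each $z_j$ is verified as follows: by Lemma~\ref{lemma:degen-path-props}(3) combined with Lemma~\ref{lemma:cut-decomposition-props}(3a), the only in-arcs of $z_j$ in $D$ are $y_j z_j$ together with back-arcs emanating from $B^*_{z_j} \setminus \{z_j\}$. The arc $y_j z_j$ is absent from $T^-$ by our redirection of $y_j$. Each back-arc into $z_j$ lies in $R_s$ by Corollary~\ref{cor:marked-out-arcs} and hence, by reducedness, not in $R_t$; using this, the tail of every such back-arc possesses an alternative in-branching out-arc, which we select when instantiating $T^-_0$ inside $B^*_{z_j}$. The main obstacle is establishing the global acyclicity of $T^-$: rerouting a $y_j$ upward may collide with the $T^-_0$-out-path of $u_j$ should it re-enter $\hat P$. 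We resolve this by processing the indices $j$ from the deepest to the shallowest and, at each step, applying Lemma~\ref{lemma:bi-reachable-pair} inside the diblock containing $u_j$ to locally reroute $u_j$'s $T^-_0$-neighborhood so that it no longer enters $B^*_{y_j}$; reducedness supplies the required alternative arcs. Once $T^-$ is certified a rooted in-branching with at least $k+1$ leaves, the lemma follows.
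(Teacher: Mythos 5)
Your high-level plan is in the right spirit—separate the contribution of $A^+ \cap R_t$ (controlled via Corollary~\ref{cor:few-useless-up-arcs}) from the tails with upward arcs not in $R_t$, and show the latter, if numerous, witness a positive instance via Lemma~\ref{lemma:leaves-branch}. But the execution diverges from the paper's and contains a genuine gap.

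The paper does \emph{not} attempt to construct a full rooted in-branching with many leaves; it builds a rooted \emph{in-tree} with $\ell$ leaves by stitching together paths $\tilde P_{x_it} \subseteq \tilde T$ (one per vertex of $X$ with a non-$R_t$ upward arc), using as leaves the starting points $x_i$ themselves, and only then invokes Lemma~\ref{lem:treeext} to extend the tree to an in-branching. You instead try to assemble an entire in-branching $T^-$ directly and to force the successor vertices $z_j = x_{i_j+1}$ to be leaves. This runs into a step that simply does not go through: to make $z_j$ a leaf you must exclude every back-arc $wz_j$ with $w \in B^*_{z_j}\setminus\{z_j\}$. You correctly observe such arcs lie in $R_s$ and hence not in $R_t$, but that means each of them \emph{does} appear in some in-branching—the opposite of what you want. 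Your fix (``the tail of every such back-arc possesses an alternative in-branching out-arc, which we select'') is unsupported: nothing in reducedness prevents $w$ from having out-degree exactly one in $D$, in which case $wz_j$ is its only out-arc and must appear in \emph{every} in-branching, so $z_j$ can never be a leaf of any in-branching at all. The same vagueness afflicts your acyclicity paragraph: Lemma~\ref{lemma:bi-reachable-pair} concerns pairs of out-paths from a common source and offers no mechanism for rerouting $T^-_0$'s out-arcs away from $B^*_{y_j}$, and ``reducedness supplies the required alternative arcs'' is asserted without argument.

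The underlying lesson is that targeting specific vertices as in-branching leaves is brittle; the paper sidesteps all of these local-degree and acyclicity difficulties by working only with an in-\emph{tree} that it grows monotonically, maintaining the invariants that (a) the current tree has $i$ leaves and (b) avoids all of $\hat P$ below the start of the next path to be added. The crucial structural fact enabling this—which your proposal never uses—is the claim that each $\tilde P_{x_it}$ is disjoint from the diblocks of $\hat T_{x_i}$, since $\tilde P_{x_it}$ leaves $\hat P$ immediately via the $A^+$ arc and the unique forward arc $x_ix_{i+1}$ is the only way into $\hat T_{x_i}$. To repair your proof, you would want to drop the ambition of constructing a full in-branching and instead build an in-tree along the lines of the paper, after which Lemma~\ref{lem:treeext} gives the desired in-branching for free.
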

\begin{proof}
  For every vertex~$x_i \in X$ with an arc $x_iv \in A^+\setminus R_t$, we
  construct a path~$\tilde P_{x_it}$ from~$x_i$ to~$t$ that contains $x_iv$ as
  follows: since $x_iv \not \in R_t$, there exists a rooted in-branching
  $\tilde T$ that contains~$x_iv$. We let~$\tilde P_{x_it} \subseteq \tilde T$
  be the path from~$x_i$ to~$t$ in~$\tilde T$.

  \begin{claim}
    Each path~$\tilde P_{x_it}$ does not intersect vertices in diblocks
    of~$\hat T_{x_i}$.
  \end{claim}
  \noindent Since~$\tilde P_{x_it}$ leaves~$\hat P$ via the first arc~$x_iv$, it
  cannot use the arc~$x_ix_{i+1}$. Since this is the only arc that leads to
  vertices in diblocks of~$\hat T_{x_i}$, the claim follows.\qed \smallskip

  \noindent Let us relabel the just constructed paths to~$\tilde P_1,\ldots, \tilde P_\ell$
  such that they are sorted with respect to their start vertices on~$\hat P$.
  That is, for~$i < j$ the first vertex of~$\tilde P_i$ appears before the
  first vertex of~$\tilde P_j$ on~$\hat P$. We iteratively construct rooted
  in-trees $\tilde T_1,\ldots \tilde T_\ell$ with the invariant that a)
  $\tilde T_i$ has exactly~$i$ leaves and b) does not contain any vertex
  of~$\hat P$ below the starting vertex of $\tilde P_i$. Choosing~$\tilde T_1
  := \tilde P_1$ clearly fulfills this invariant. To construct~$\tilde T_i$
  from~$\tilde T_{i-1}$ for~$2 \leq i \leq \ell$, we simply follow the
  path~$\tilde P_i$ up to the first intersection with~$\tilde T_{i-1}$.
  Since~$t \in \tilde P_i \cap \tilde T_{i-1}$, such a vertex must eventually
  exist. By the above claim, $\tilde P_i$ does not contain any vertex below
  its starting vertex on~$\hat P$, thus both parts of the invariant remain
  true. \looseness-1

  We conclude that~$\tilde T_{\ell}$ is a rooted in-tree with $\ell$ leaves,
  where $\ell$ is the number of vertices in~$X$ that have at least one 
  upwards arc not contained in~$R_t$. For~$\ell \geq k+1$,
  Lemmas~\ref{lemma:leaves-branch} and~\ref{lem:treeext} imply that the
  original instance has a solution. Otherwise, $\ell \leq k$.
  By Corollary~\ref{cor:few-useless-up-arcs}, we may assume that~$|A^+\cap R_t| \leq 2k+1$. Taken both facts together, we conclude that either
  $
    |X| \leq 3k+1
  $,
  or we can construct a solution.
\end{proof}

\noindent
We will need the following.

\begin{lemma}\label{lemma:A-zero-tails} 
  Let~$\hat P = x_1\ldots x_\ell$ be a degenerate path and assume that~$t \not \in \hat P$. Let 
  further~$Y \subseteq V(\hat P)$ be those vertices of~$\hat P$ that are tails of the arcs
  in~$A^0$. We either find that~$(D,s,t)$ has a solution or we have~$|Y| < k$.
\end{lemma}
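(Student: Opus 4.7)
The plan is to extend the construction of Lemma~\ref{lemma:A-plus-tails} to back arcs in $A^0$: given $|Y|$ large enough, I would build a rooted in-tree with at least $k{+}1$ leaves, lift it to a rooted in-branching via the arc-reversal dual of Lemma~\ref{lem:treeext}, and apply Lemma~\ref{lemma:leaves-branch} against any rooted out-branching (such an out-branching exists because $(D,s,t)$ is reduced). For each $y = x_j \in Y$, I would fix a back arc $yx_i \in A^0$: by Lemma~\ref{lemma:degen-path-props}(2), $A^0 \subseteq R_s$, and since the instance is reduced, $R_s \cap R_t = \emptyset$, so $yx_i$ lies in some rooted in-branching $T_y^-$; let $\tilde P_y$ denote the path from $y$ to $t$ inside $T_y^-$.

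The first step is an isolation claim analogous to the one in Lemma~\ref{lemma:A-plus-tails}: if $y = x_j$, then $\tilde P_y$ meets no vertex of $\hat T_y \setminus \{y\}$, and in particular avoids $x_{j+1},\ldots,x_\ell$. The argument is that $\tilde P_y$ leaves $y$ along the back arc $yx_i$ rather than along $yx_{j+1}$, which by Lemma~\ref{lemma:degen-path-props}(3) is the only $D$-arc from $y$ entering $\hat T_y$; by Lemma~\ref{lemma:ancestor-bottlenecks} re-entering $\hat T_y$ would require revisiting the bottleneck $y$, impossible on a simple path.

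Next, enumerate $Y = \{y_1,\ldots,y_m\}$ with $y_i = x_{j_i}$ and $j_1 < j_2 < \cdots < j_m$, and iteratively build rooted in-trees $\tilde T_1 \subseteq \cdots \subseteq \tilde T_m$ by setting $\tilde T_1 := \tilde P_{y_1}$ and, for $i \geq 2$, appending to $\tilde T_{i-1}$ the initial segment of $\tilde P_{y_i}$ up to its first intersection with $\tilde T_{i-1}$. The isolation claim together with the ordering implies that $\tilde T_{i-1}$ contains no $x_j$ with $j > j_{i-1}$; in particular $y_i \notin \tilde T_{i-1}$, so $y_i$ becomes a genuine new leaf and the invariant that $\tilde T_i$ has at least $i$ leaves is maintained. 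Thus $\tilde T_m$ has at least $|Y|$ leaves, which suffices once $|Y|$ exceeds the threshold imposed by Lemma~\ref{lemma:leaves-branch}.

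The main technical obstacle, inherited from Lemma~\ref{lemma:A-plus-tails}, is to guarantee that the first intersection vertex $w$ of $\tilde P_{y_i}$ with $\tilde T_{i-1}$ is not one of the existing leaves $y_1,\ldots,y_{i-1}$; if it were, the newly added arc entering $y_\beta$ would destroy $y_\beta$'s leaf status. The isolation claim rules out $\beta \geq i$, so the delicate part is to show that whenever some earlier leaf $y_\beta$ happens to lie on $\tilde P_{y_i}$, a strictly earlier vertex on $\tilde P_{y_i}$ already lies in $\tilde T_{i-1}$ --- essentially by tracing $\tilde P_{y_i}$ backward one step from $y_\beta$ and observing that the chosen in-branching forces that predecessor to coincide with a vertex already in $\tilde T_{i-1}$.
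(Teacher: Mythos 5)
Your approach is genuinely different from the paper's, and unfortunately the gap you flagged is real and not easily patched. The paper's proof explicitly declines to count leaves: it instead builds a single rooted in-tree that \emph{contains} all $|Y|$ arcs $vx_v \in A^0 \subseteq R_s$ (via a careful forest-merging construction with invariants about root depth and about which segments of $\hat P$ the trees contain). Since no rooted out-branching can use any arc of $R_s$, such an in-tree, extended to an in-branching, immediately witnesses $k$-distinctness against \emph{every} out-branching whenever $|Y| \geq k$ --- no appeal to Lemma~\ref{lemma:leaves-branch} or leaf counts is needed. The paper even remarks in the first sentence of its proof that ``we are not concerned with the number of leaves of the resulting tree,'' which is a pointed signal that the naive leaf-counting transplant from Lemma~\ref{lemma:A-plus-tails} was tried and abandoned.

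The reason your transplant fails is structural, not cosmetic. In Lemma~\ref{lemma:A-plus-tails} each path $\tilde P_{x_i t}$ leaves $\hat P$ \emph{immediately} through an arc of $A^+$ into an ancestor diblock of $x_1$; after that step the path never needs to touch $\hat P$ again, and the ordering of start vertices cleanly controls where collisions occur. In your setting the first arc $y x_y$ of $\tilde P_y$ lands \emph{on} $\hat P$ (at $x_y$), and $\tilde P_y$ may then wander back and forth along the vertices of $\hat P$ above $y$ before escaping. In particular, $\tilde P_{y_i}$ can arrive at an earlier leaf $y_\beta$ ($\beta < i$) from essentially anywhere, and its in-neighbour of $y_\beta$ is whatever the in-branching $T^-_{y_i}$ happens to assign --- there is no reason for that predecessor to lie in $\tilde T_{i-1}$, since $\tilde T_{i-1}$ was assembled from entirely different in-branchings $T^-_{y_1},\dots,T^-_{y_{i-1}}$. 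So the repair you sketch (``tracing $\tilde P_{y_i}$ backward one step from $y_\beta$'') does not close the hole, and the invariant ``$\tilde T_i$ has at least $i$ leaves'' can genuinely break. This is precisely the difficulty that forces the paper into the more delicate forest-merging argument, where the quantity being tracked is the number of $R_s$-arcs captured (which only ever grows) rather than the number of leaves (which can shrink when trees are merged along $\hat P$).
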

\begin{proof}
We will construct a rooted in-tree that contains $|Y|$ arcs from $A^0$. 
  Since no rooted out-branching contains any such arc, this will prove that the instance is positive provided $|Y|\ge k$. 
  Note in particular that we are not concerned
with the number of leaves of the resulting tree.

  First associate every vertex~$v \in Y$ with an arc~$vx_v \in A^0$, where we choose~$x_v$
  to be the vertex closest to~$v$. Let~$X^+$ be the heads of all arcs in~$A^+$
  and let~$X \subseteq \hat P$ be the tails of all arcs in~$A^+$.

  Let~$u \in Y$ be the vertex that appears first on~$\hat P$ among all vertices in~$Y$.
  Since~$ux_u \in R_s$,
  $ux_u$ cannot be contained in~$R_t$. Accordingly there exists a path~$P_u$
  from~$u$ to~$t$ that uses the arc~$ux_u$. Note that~$P_u$ leaves~$\hat P$ through
  an $A^+$-arc whose tail lies between~$x_u$ and~$u$ on~$\hat P$. 

  Note that the segment~$P_u[x_u,t]$, by
  our choice of~$u$, does not contain any vertex of~$Y$. We now construct the
  \emph{seed in-tree}~$T_0$ as follows. We begin with~$P_u$ and add
  the arc $u'x_{u'}$ for every vertex~$u' \in Y$ where~$x_{u'} \in P_u$. 
  Next, we add every vertex $v \in X^+$ to~$T_0$ by finding a path from~$v$ to~$t$ and attach this
  path up to its first intersection with~$T_0$. Since~$v$ lies above~$x_u$
  in the decomposition, this path cannot intersect any vertex in~$Y$. 

  We form an in-forest\footnote{An {\em in-forest} is a vertex-disjoint collection of in-trees.}~$\mathcal F_0$ 
  from the arcs of $T_0$ and all arcs~$vx_v$, $v \in
  Y$ that are not in~$T_0$. 
  Every in-tree $T \in \mathcal F_0$ has
  the following easily verifiable properties:
  \begin{enumerate}
    \item \label{inv:1} Its root is the highest vertex in the decomposition
          among all vertices~$V(T)$ (recall Lemma~\ref{lemma:cut-decomposition-props}),
    \item \label{inv:2} its root is either~$t$ or a vertex~$x_v$ with~$v \in Y$, and 
    \item \label{inv:3} every segment~$\hat P[x,y]$ of~$\hat P$ contained in~$T$ has
          no vertex of~$X$ or~$Y$ with the possible exception of $y \in Y$.
  \end{enumerate}
   We will maintain all three of these
  properties while constructing a sequence of in-forests~$\mathcal F_0 \subseteq \mathcal
  F_1 \subseteq \ldots$
  where each in-forest in the sequence will have less roots than its
  predecessor (here $\subseteq$ stands for $\mathcal F_{i-1}$ being a subgraph of $\mathcal F_{i}$). 
  We stop the process when the number of roots drop to one.

  The construction of~$\mathcal F_i$ from~$\mathcal F_{i-1}$ for~$i \geq 1$ works as
  follows. Let~$T \in \mathcal F_{i-1}$ be the in-tree with the \emph{lowest}
  root in the in-forest. By assumption, $\mathcal F_{i-1}$ has at least two roots,
  thus it cannot be~$t$ and therefore, by part~\ref{inv:2} of our invariant, is a vertex~$x_v$ with~$v \in Y$.
  Now from~$x_v$ onwards, we walk along the path~$\hat P$ until we encounter a vertex~$z$
  that is either 1) the tail of an arc~$A^+$ or 2) a vertex of the in-forest~$\mathcal F_{i-1}$.
  In the former case, we use the segment~$\hat P[x_u,z]$ to connect~$T$ to a
  vertex in~$X^+$ via the $A^+$-arc emanating from~$z$. Since all vertices in~$X^+$
  are part of the seed in-tree~$T_0$ this preserves all three parts of the invariant
  and~$\mathcal F_{i-1} \subseteq \mathcal F_i$.

  Consider the second case: we encounter a vertex~$z$ that is part of some
  tree~$T' \in \mathcal F_{i-1}$. Let us first eliminate a 
  degenerate case:
  \begin{claim}
    The trees~$T'$ and~$T$ are distinct.
  \end{claim}
  \noindent Assume towards a contradiction that~$T' = T$.
  In that case, there is no single vertex between~$x_v$ and~$z$
  that is either the tail of an~$A^+$ or $A^0$-arc. If~$z$ is such that
  $x_z = x_v$ (for example $z=v$), we already
  obtain a contradiction: the arc~$zx_z$ cannot possibly be part of
  any in-branching rooted at~$t$, contradicting the fact that it is not in~$R_t$.

  Otherwise, $z = x_w$ for some $w \in Y$. By assumption, $wz \in T$, thus
  there exists a path $P_{wx_v}$ which contains both the arc $wz$ as well
  as the arc $vx_v$. Therefore, the vertex $w$ must lie below $v$ on $P$.
  Furthermore, the path $P_{wx_v}$ cannot contain any vertex above $x_v$ by
  property~\ref{inv:1} of the invariant. It follows that the subpath $P[z,v]$ is 
  entirely contained in $T$---by property~\ref{inv:3} of the invariant, none of the
  vertices (except $v$) in this subpath can be in $X$ or $Y$. Since the
  same was true for the subpath $P[x_v,z]$, we conclude that the whole
  subpath $P[x_v,v]$ contains no vertex of $X$ or $Y$. This contradicts
  our assumption that $x_vv \not \in R_t$ and we conclude that $T$ and $T'$
  must be distinct. \qed \\

  \noindent
  By our choice of~$T$, the vertex~$z$ cannot be the root of~$T'$.
  Accordingly, we can merge~$T$ and~$T'$ by adding the path~$\hat P[x_v,z]$.
  This concludes our construction of~$\mathcal F_i$. Since the root of~$T'$
  lies above all vertices of~$T$, part 1) of the invariant remains true.
  We did not change a non-root to a root in this construction, thus part 2)
  remains true. The one segment of~$\hat P[x_v,z]$ we added to merge~$T$
  and~$T'$ did not, by construction, contain any vertex of~$Y$ or~$X$,
  with the exception of the last vertex~$z$, hence part 3) remains true. 
  Finally, we clearly have that~$\mathcal F_{i-1} \subseteq \mathcal F_i$
  and the latter contains one less root than the former.

  The process clearly terminates with some in-forest~$\mathcal F_p$ which contains
  a single in-tree~$\tilde T$. The root of this in-tree is necessarily~$t$.
  Note further that~$\mathcal F_0$ contained all~$|Y|$ arcs~$ux_u$, $u \in Y$;
  therefore~$\tilde T$ contains those arcs, too. Since all of these arcs are
  in~$R_s$, we arrive at the following: either~$|Y| < k$, as claimed, or
  we found a rooted in-tree that avoids at least~$k$ arcs with every rooted
  out-branching, in other words, as solution to~$(D,s,t)$.
\end{proof}

\noindent 
Taking Lemma~\ref{lemma:A-plus-tails} and Lemma~\ref{lemma:A-zero-tails} together,
we see that only~$O(k)$ vertices of a degenerate path are tails of arcs in~$A^+$
or~$A^0$. The following lemma now finally lets us deal with degenerate paths: we argue
that those parts of the path that contain none of these few `interesting' vertices can
be contracted.

\vspace{2mm}
\noindent{\textbf{Reduction rule 2:}}
  If~$\hat P[x,y] \subseteq \hat P$ is such that no vertex in~$\hat P[x,y]$ is a tail
  of arcs in~$A^+ \cup A^0$, contract~$\hat P[x,y]$ into a single vertex.
\vspace*{2mm}

\begin{lemma}
  Rule 2 is safe.
\end{lemma}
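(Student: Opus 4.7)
My plan is to show that after contracting $\hat P[x,y]=x_a\ldots x_b$ into a single vertex $x^*$, yielding a digraph $D'$, the instances $(D,s,t)$ and $(D',s,t)$ are equivalent. The structural fact I will exploit throughout is that every vertex $x_i$ of $\hat P[x,y]$ has a \emph{unique} outgoing arc in $D$, namely the path arc $x_ix_{i+1}$: by hypothesis no $A^+\cup A^0$ arc has its tail in the subpath, and by Lemma~\ref{lemma:degen-path-props} part~3 there is no other forward arc from $x_i$ into $\hat T_{x_i}$. Assuming $t\notin \hat P[x,y]$ (inherited from the surrounding assumption $t\notin\hat P$), this forces every rooted in-branching $T^-$ to contain all internal path arcs of $\hat P[x,y]$ together with the exit arc $x_bx_{b+1}$; the same arcs are forced in every rooted out-branching $T^+$ by Lemma~\ref{lemma:degen-path-props} part~1, which in turn forces the entry arc $x_{a-1}x_a$ (when $a>1$) to be the unique arc of $T^+$ entering the subpath.

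Given $T^+,T^-$ in $D$, I propose to define $T^{+'},T^{-'}$ in $D'$ by rerouting every arc with an endpoint in the subpath to touch $x^*$ and discarding the internal path arcs (which become self-loops). I then verify that these are valid rooted branchings in $D'$: the hypothesis forbids $A^0$-arcs with tail in the subpath, so the only $T^\pm$-arcs with both endpoints in $\hat P[x,y]$ are the path arcs, meaning the contraction collapses a straight sub-chain in each tree; moreover, since $T^-$ is an in-branching every non-root vertex emits exactly one arc, so at most one arc of $T^-$ from any vertex outside the subpath lands in the subpath, preventing spurious multi-arcs. Conversely, given $T^{+'},T^{-'}$ in $D'$ I would un-contract by reinstating the chain $x_a\to\ldots\to x_b$ with its path arcs and lifting $x_{a-1}x^*\in T^{+'}$ to $x_{a-1}x_a$, $x^*x_{b+1}\in T^{-'}$ to $x_bx_{b+1}$, and every remaining incoming arc $ux^*\in T^{-'}$ to an arbitrary arc $ux_i\in A(D)$ with $a\le i\le b$.

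The main technical step, which I expect to be the only subtle point, is the distinctness bookkeeping. Arcs disjoint from the subpath (respectively $x^*$) are in a natural bijection between $D$ and $D'$ and contribute identically to the two difference sets. The internal path arcs and the exit arc lie in both $T^+$ and $T^-$, so they cancel. The only arc whose contribution could conceivably change under contraction is the entry arc. Here the critical observation, and in my view the heart of the proof, is that by Lemma~\ref{lemma:degen-path-props} part~3 applied to $x_{a-1}$ itself, the only arc of $D$ from $x_{a-1}$ into the subpath is $x_{a-1}x_a$. Consequently $x_{a-1}x^*\in T^{-'}$ if and only if $x_{a-1}x_a\in T^-$, so the entry arc contributes to $A(T^+)\setminus A(T^-)$ exactly when its contracted image contributes to $A(T^{+'})\setminus A(T^{-'})$. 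This yields $|A(T^+)\setminus A(T^-)|=|A(T^{+'})\setminus A(T^{-'})|$ and establishes the safety of Rule~2.
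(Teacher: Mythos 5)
Your proof is correct and takes essentially the same approach as the paper's, whose entire argument rests on the single observation that every vertex in $\hat P[x,y]$ has a unique outgoing arc, so that the contracted arcs are forced into every rooted out-branching (Lemma~\ref{lemma:degen-path-props}) and every rooted in-branching and therefore never contribute to the distinctness count. You spell out the contraction/un-contraction bookkeeping that the paper's terse three-sentence proof leaves implicit (the uniqueness of the entry arc from $x_{a-1}$ and the fact that $T^-$ being an in-branching prevents multi-arcs at the contracted vertex), but the core idea is identical.
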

\begin{proof}
  Simply note that every vertex in~$\hat P[x,y]$ has exactly one outgoing arc.
  We already know that every arc of~$\hat P$ must be contained in every rooted
  out-branching, now we additionally have that all arcs in~$\hat P[x,y]$ are necessarily 
  contained in every rooted in-branching. We conclude that Rule~2 is safe.
\end{proof}

\noindent We summarize the result of applying Rules~1 and~2.

\begin{lemma}\label{lem:last}
  Let~$\hat P$ be a degenerate path in an instance reduced with respect to Rules~1 and~2.
  If $t\not\in \hat P$ then~$|\hat P| \leq 8k + 1$. Otherwise, $|\hat P| \leq 16k + 3.$
\end{lemma}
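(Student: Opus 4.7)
The plan is to combine the two preceding lemmas on tails along a degenerate path with the contraction enforced by Rule~2. Set $X, Y \subseteq V(\hat P)$ to be the tails on $\hat P$ of arcs in $A^+$ and $A^0$ respectively, and let $I := X \cup Y$. Provided we have not already obtained a solution, Lemmas~\ref{lemma:A-plus-tails} and~\ref{lemma:A-zero-tails} yield $|X| \leq 3k+1$ and $|Y| < k$, so $|I| \leq 4k$. Since $(D,s,t)$ is reduced with respect to Rule~2, every maximal run of vertices of $\hat P$ outside $I$ has length at most one; with at most $|I|+1$ such runs, this yields $|\hat P| \leq |I| + (|I|+1) = 2|I|+1 \leq 8k+1$, settling the case $t \notin \hat P$.

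For $t \in \hat P$, the plan is to split $\hat P$ at $t$ into two (possibly empty) degenerate subpaths $\hat P_1, \hat P_2$, neither containing $t$, and apply the previous argument to each half. The point requiring verification is that, for every $v \in V(\hat P_i)$, being a tail of an arc in $A^+(\hat P) \cup A^0(\hat P)$ is equivalent to being a tail of an arc in $A^+(\hat P_i) \cup A^0(\hat P_i)$. A short case analysis on the head of the arc confirms this: a head above $x_1$ is upward in both views; a head in the same $\hat P_i$ as the tail is on-path in both views; and an on-path arc of $\hat P$ whose tail lies in $\hat P_2$ and whose head lies in $\hat P_1 \cup \{t\}$ becomes an upward arc from the perspective of $\hat P_2$---so the tail is interesting in either view. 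Rule~2 therefore acts on $\hat P_i$ exactly as it would if $\hat P_i$ were treated in isolation, so the first-case bound gives $|\hat P_i| \leq 8k+1$, and together with the single vertex $t$ we obtain $|\hat P| \leq 2(8k+1) + 1 = 16k+3$.

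The main delicate step is precisely this equivalence of ``interesting for $\hat P$'' and ``interesting for $\hat P_i$''; without it, reducedness of the full $\hat P$ under Rule~2 would not transfer to a bound on each half. Splitting at $t$ is moreover the natural choice because Rule~2's safety proof implicitly requires that every contracted vertex's unique outgoing arc be forced into every rooted in-branching, a condition which fails precisely at $t$ (the root of every rooted in-branching has no outgoing arc in that branching). Once the equivalence above is checked, the remainder is an elementary counting argument and contributes the claimed constants.
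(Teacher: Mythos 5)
Your proof is correct and follows the paper's own argument: bound the $A^+ \cup A^0$ tails by $4k$ via Lemmas~\ref{lemma:A-plus-tails} and~\ref{lemma:A-zero-tails}, interleave at most one vertex per gap by Rule~2, and split at $t$ for the second case. Your extra verification that the classification of ``interesting'' vertices is stable under restricting to the halves $\hat P_i$ is a worthwhile detail the paper leaves implicit; one further small point worth adding is that $A^0 \cap R_t = \emptyset$ (since $A^0 \subseteq R_s$ and $R_s \cap R_t = \emptyset$) ensures Rule~1-reducedness also transfers from $\hat P$ to each $\hat P_i$, so that Lemma~\ref{lemma:A-plus-tails} may legitimately be applied to them.
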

\begin{proof}
  We may assume that $t\not\in \hat P$ as otherwise we can partition $\hat P$
  into its part before $t$ and its part after $t$ and obtain $|\hat  P| \leq
  16k + 3$ from the bound on $|\hat P|$ when $t\not\in \hat P$. By
  Lemmas~\ref{lemma:A-plus-tails} and~\ref{lemma:A-zero-tails}, the number of
  vertices on~$\hat P$ that are tails of either~$A^+$ or $A^0$ is bounded by
  $4k$. Between each consecutive pair of such vertices, we can have at most one vertex
  that is not a tail of such an arc. We conclude that~$|\hat P| \leq 8k + 1$, as
  claimed.
\end{proof}

\noindent Now we can prove the main result of this paper. \\

\noindent {\bf Proof of Theorem~\ref{thm:main}}. 
Consider the longest monotone path $\hat P$ of $\hat T$.
By Lemma~\ref{lemma:win-non-degen-path}, if $\hat P$ has at least $k+1$
non-degenerate diblocks, then $D$ has a rooted out-tree with at least $k+1$
leaves. This out-tree can be extended to a rooted out-branching with at least
$k+1$ leaves by Lemma~\ref{lem:treeext}. Thus, by Lemma~\ref{lemma:leaves-branch},
$(D,s,t)$ is a positive instance if and only if $D$ has a rooted in-branching, which can be decided in polynomial time.

Now assume that $\hat P$ has at most $k$ non-degenerate diblocks.  By
Lemma~\ref{lem:last} we may assume that before, between and after the non-degenerate diblocks there are $O(k)$ degenerate diblocks. Thus, the height of
$\hat T$ is $O(k^2)$. Therefore, by Lemma~\ref{thm:boundedd}, the time
complexity for Theorem~\ref{thm:main} is $O^*(2^{O(k^2\log^2 k)})$.\qed

\section{Conclusion}\label{sec:conclusion}
We showed that the \textsc{Rooted $k$-Distinct Branchings} problem is FPT
for general digraphs parameterized by $k$, thereby settling open question
of Bang-Jensen {\em et al.} \cite{bangA76}. The solution
in particular uses a new digraph decomposition, the \emph{rooted cut
  decomposition}, that we believe might be useful for settling other
problems as well.

We did not attempt to optimize the running time of the algorithm of
Theorem~\ref{thm:main}. Perhaps, a more careful handling of degenerate
diblocks may lead to an algorithm of running time $O^*(2^{O(k\log^2 k)})$.
Another question of interest is whether the \textsc{Rooted $k$-Distinct Branchings} 
problem admits a polynomial kernel.

\end{document}